\def\algbackskip{\hskip-\ALG@thistlm}
\renewcommand{\arraystretch}{1.5}
\newtheorem{problem}{Problem}[section]
\newtheorem{lemma}{Lemma}[section]
\begin{document}

	\title{An Efficient Quantum Compiler that Reduces $T$ Count}
	\date{25\textsuperscript{th} May 2018}
	\author{Luke E. Heyfron}
	\email{leheyfron1@sheffield.ac.uk}
	\affiliation{Department of Physics and Astronomy, University of Sheffield, Sheffield, UK}
	\author{Earl T. Campbell}
	\email{earltcampbell@gmail.com}
	\affiliation{Department of Physics and Astronomy, University of Sheffield, Sheffield, UK}

	\begin{abstract}
	Before executing a quantum algorithm, one must first decompose the algorithm into machine-level instructions compatible with the architecture of the quantum computer, a process known as quantum compiling.
	There are many different quantum circuit decompositions for the same algorithm but it is desirable to compile leaner circuits.
	A fundamentally important cost metric is the $T$ count -- the number of $T$ gates in a circuit.
	For the single qubit case, optimal compiling is essentially a solved problem.
	However, multi-qubit compiling is a harder problem with optimal algorithms requiring classical runtime exponential in the number of qubits.
	Here, we present and compare several efficient quantum compilers for multi-qubit Clifford + $T$ circuits. 
	We implemented our compilers in C++ and benchmarked them on random circuits, from which we determine that our TODD compiler yields the lowest $T$ counts on average.
	We also benchmarked TODD on a library of reversible logic circuits that appear in quantum algorithms and found that it reduced the $T$ count for 97\% of the circuits with an average $T$-count saving of 20\%  when compared against the best of all previous circuit decompositions.
	\end{abstract}
	
	\maketitle
	
		Compiling is the conversion of an algorithm into a series of hardware level commands or elementary gates.
		Better compilers can implement the same algorithm using fewer hardware level instructions, reducing runtime and other resources.
		Quantum compiling or gate-synthesis is the analogous task for a quantum computer and is especially important given the current expense of quantum hardware.
		Early in the field, Solovay and Kitaev proposed a general purpose compiler for any universal set of elementary gates~\cite{1_kitaev_02,2_dawson_05,3_fowler_11}.
		Newer compilers exploit the specific structure of the Clifford+$T$ gate set and have reduced quantum circuit depths by several orders of magnitude~\cite{4_kliuchnikov_13,5_selinger_13,6_gosset_14,7_ross_16},  often improving the classical compile time.
		The Clifford+$T$ gate set is natural since it is the fault-tolerant logical gate set in almost every computing architecture~\cite{8_campbell_16}.
		Moreover, fault-tolerance protocols have been proposed such as magic state distillation~\cite{9_bravyi_05} that lead to a cost per $T$ gate which is several  hundred times larger than that of Clifford gates~\cite{10_raussendorf_07,11_fowler_12,12_gorman_17}, which suggests $T$ count as the key metric of compiler performance.
		Furthermore, the $T$ count is an important metric beyond the standard compiling problem because it relates to the classical overhead of simulating quantum circuits \cite{45_bravyi_2016,46_bravyi_2016,16_howard_17} as well as the distillation cost of synthillation~\cite{24_campbell_17}.
		For these reasons, it is clear that developing methods for minimizing the $T$ count is crucial for a variety of applications in quantum computation.		
		
		Significant progress has been made on synthesis of single-qubit unitaries from Clifford+$T$ gates.
		For purely unitary synthesis, the problem is essentially solved since we have a compiler that is asymptotically optimal and efficient~\cite{4_kliuchnikov_13,7_ross_16}.
		Although further improvements are possible beyond unitary circuits, by making use of ancilla qubits and measurements~\cite{13_paetznick_14,14_bocharov_15,15_bocharov_15b,16_howard_17} or adding an element of randomness to compiling~\cite{17_campbell_17,18_hastings_16}.
		On the other hand, the multi-qubit problem is much more challenging.
		An algorithm for multi-qubit  unitary synthesis over the Clifford+$T$ gate set is known that is provably optimal in terms of the $T$ count but the compile runtime is exponential in the number of qubits~\cite{6_gosset_14,19_Amy_13}.
		Compilers with efficient runtimes have been proposed but with no promise of $T$ count optimality~\cite{20_Amy_13,21_nam_17}.		
		We seek a compiler that runs efficiently and yields circuits with $T$ counts that are as low as practically achievable. 
				
		A useful strategy is to take an initial Clifford+$T$ circuit and split it into subcircuits containing Hadamards and subcircuits containing CNOT, $S$ and $T$ gates.
		One can then attempt to reduce the number of $T$ gates within just the latter type of subcircuit.
		Amy and Mosca recently showed that this restricted problem is formally equivalent to error decoding on a class of Reed-Muller codes~\cite{22_amy_16}, which is in turn equivalent to finding the symmetric tensor rank of a 3-tensor~\cite{23_seroussi_80}.
		Unfortunately, even this easier sub-problem is difficult to solve optimally.
		Nevertheless, it is more amenable to efficient solvers that offer reductions in $T$ count.
		Amy and Mosca proved that an $n$-qubit subcircuit (containing CNOT, $S$ and $T$ gates) has an optimal decomposition into $n^2/2+O(n)$ $T$ gates.
		At the time,  known efficient compilers could only promise an output circuit with no more than $O(n^3) $ $T$ gates.
		Later, Campbell and Howard~\cite{24_campbell_17} sketched a compiler that is efficient and promises an output circuit with at most $n^2/2+O(n)$ $T$ gates.
		This shows efficient compilers can in this sense be ``near-optimal" with respect to worst case scaling.
		On the mathematical level, Campbell and Howard exploited a previously known efficient and optimal solver for a related 2-tensor problem~\cite{25_lempel_75} but suitably modified so that it nearly-optimally solves the required 3-tensor problem.
		
		This paper develops several different compilers that have polynomial runtime in $n$ and are near-optimal in the above sense when restricted to CNOT+T circuits.
		We modify the compiler to also accommodate Hadamard gates using a gadgetisation trick that requires additional resources (measurements, feed-forward and ancillas) and find that it performs well in practice.
		We provide the first implementations of such compilers (the source code is available here \footnote{Source code available at \url{https://github.com/Luke-Heyfron/TOpt}.}) and compare performance against: a family of random circuits; and a library of benchmark circuits that implement actual quantum algorithms.
		For random circuits, we observe $O(n^2)$ scaling in $T$ count for all variants of our compiling approach compared with $O(n^3)$ scaling for compilers based on earlier work.
		Quantum algorithms are highly structured and far from random, so the number of $T$ gates can not be meaningfully compared with the worst case scaling.  
		Instead, we benchmark against the best previously known results and found on average a 20\% $T$ count reduction. 
		In one instance, our compiler gave a 51\% $T$ count reduction and it performed better than previous results for all but one of the benchmarked circuits.
		Of course, the $T$ count is not the only metric relevant to gate synthesis. We discuss the limitations of the $T$ count, as well as other metrics in section~\ref{ssec_T_count}.
		
		All of the near-optimal compilers described in this paper look for inspiration in algorithms for the related 2-tensor problem, which we call Lempel's algorithm. 
		We give specific details for a compiler here called TOOL (Target Optimal by Order Lowering) that comes in two different flavours (with and without feedback). 
		The TOOL compilers can be considered concrete versions of the approach outlined by Campbell and Howard~\cite{24_campbell_17}.
		Also described in this paper is the TODD (Third Order Duplicate and Destroy) compiler, which is again inspired by Lempel but in a more direct and elegant way than TOOL.
		In benchmarking, we find that TODD often achieved even lower $T$ count than TOOL.

	\section{Preliminaries}

	The Pauli group on $n$ qubits $\mathcal{P}^n$ is the set of all $n$-fold tensor products of the single qubit Pauli operators $\{X, Y, Z, \mathbb{I}\}$ with allowed coefficients $\in \{\pm1,\pm i\}$. The $k$\textsuperscript{th} level of the \emph{Clifford hierarchy} $\mathcal{C}_k^n$ is defined as follows,
	\begin{equation}
	\label{e_heir}
	\mathcal{C}_k^n = \{U \mid U\mathcal{P}^n U^\dagger \subseteq \mathcal{C}_{k-1}^n\},
	\end{equation}
	with recursion terminated by $\mathcal{C}^n_1 = \mathcal{P}^n$.
	The Clifford group on $n$ qubits $\mathcal{C}^n$ is the normalizer of $\mathcal{P}^n$.	
	We define $\mathcal{D}_k^n$ to be the diagonal elements of $\langle CNOT, T \rangle$. We will omit the superscript $n$ when the number of qubits is obvious from context.	
	We define Clifford to be any generating set for the Clifford group on $n$ qubits such as $\{CNOT,H,S\}$.
	We define the CNOT + $T$ gate set to be $\{\mathrm{CNOT}, S, T\}$, where we include the phase gate $S=T^2$ as a separate gate due to the magic states cost model for gate synthesis~\cite{9_bravyi_05}.
	A quantum circuit decomposition for a unitary $U$ is denoted $\mathcal{U}$; conversely we say that $\mathcal{U}$ implements $U$.
	Similarly, a circuit $\mathcal{E}$ implements non-unitary channel $\rho \rightarrow \varepsilon(\rho)$. We refer to a circuit $\mathcal{U}$ that implements a $U\in \mathcal{D}_3$ as a \emph{diagonal} CNOT + $T$ \emph{circuit}.

\begin{figure}[h!]
	\includegraphics{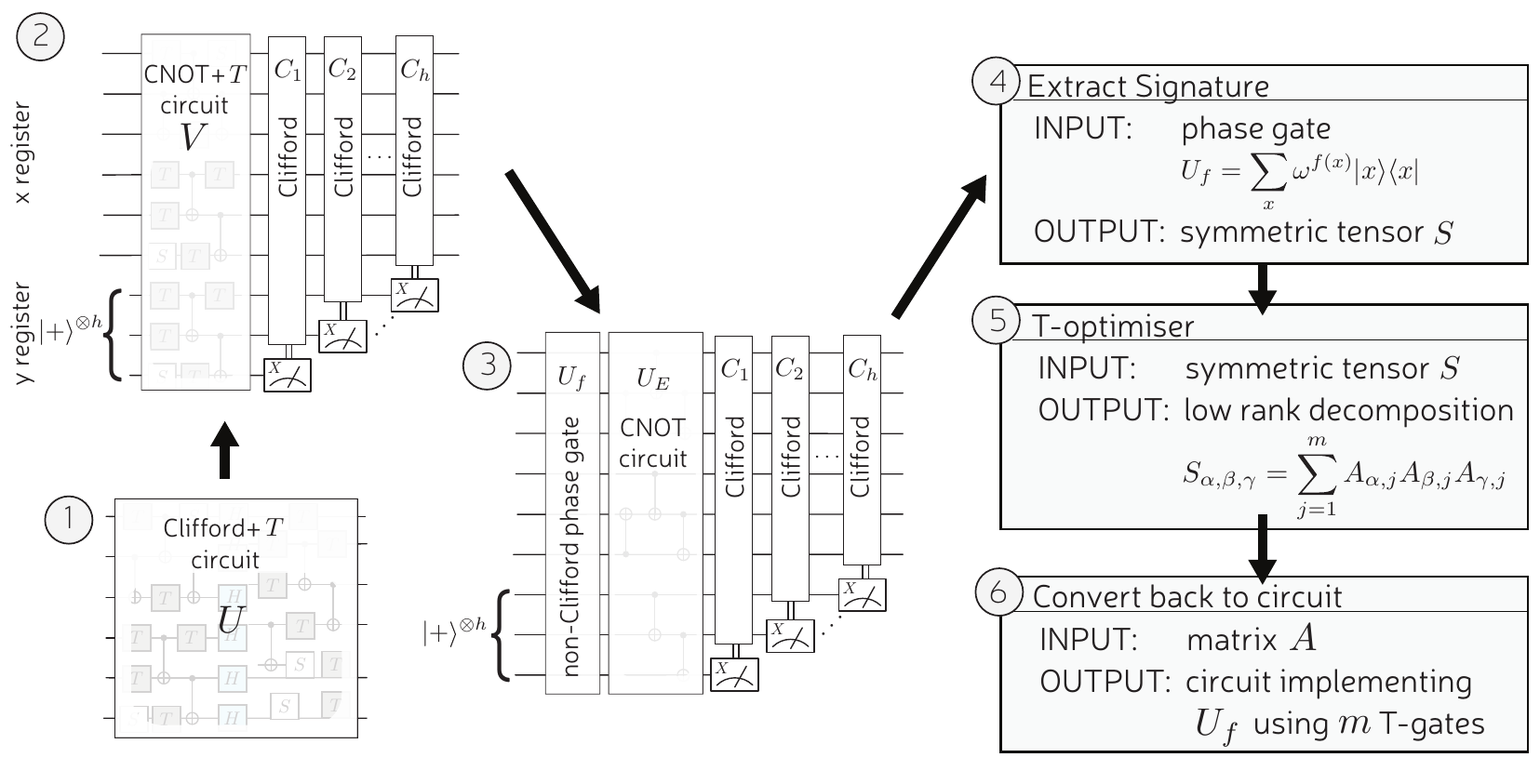}
	\caption{The high level work-flow of the T gate optimization protocol is shown.
	A Clifford + T circuit is converted to the CNOT+T gate set by introducing ancillas and performing classically controlled Clifford gates.
	A non-Clifford phase gate is extracted, which maps to a signature tensor upon which the core optimization algorithm is performed.
	The optimized symmetric tensor decomposition is then converted back into a circuit of the form in panel 2) yielding an implementation of the original Clifford + T circuit with reduced T count. }
	\label{fig_overview}
\end{figure}

\clearpage

\section{Work-flow overview}
		\label{sec_methods}
		\label{ssec_workflow}

In this section, we give a high level work-flow of our approach to compiling as sketched in Fig.~\ref{fig_overview}.
In stages 1-3, some simple circuit preprocessing is performed so that a Clifford+$T$ circuit is converted into a form where the only non-Clifford part is a diagonal CNOT+$T$ gate (an element of $\mathcal{D}_3$). 
Subsection~\ref{ssec_preprocess} describes this preprocessing.
In stages 4-6, the technically difficult aspect of compiling is addressed using a series of different algebraic representations of the circuit and these stages are described in Subsection~\ref{ssec_ciag}.

\subsection{Circuit preprocessing}
\label{ssec_preprocess}

The input circuit $\mathcal{U}_{\text{in}} \in \langle \mathrm{Clifford}, T \rangle$ implements some unitary $U$.
It acts on a register we denote x, which is composed of $n$ qubits and spans the Hilbert space $\mathcal{H}_{\text{x}}$.
The output of our compiler is a circuit $\mathcal{E}_{\text{out}}$ composed of Clifford and $T$ gates but additionally allows: the preparation of $\ket{+}$ states; measurement in the Pauli-X basis, and classical feedforward.
To account for ancilla $\ket{+}$ qubits, we include a register labelled y that is composed of $h$ qubits and spans the Hilbert space $\mathcal{H}_{\text{y}}$.
The circuit $\mathcal{E}_{\text{out}}$ will realise the input unitary after the y register is traced out
\begin{align}
\label{e_out_1}
\mathrm{Tr}_{\text{y}}[\varepsilon_{\text{out}}(\rho_{\text{x}})] & = \mathrm{Tr}_{\text{y}}[\varepsilon_{\text{post}}(V(\rho_{\text{x}}\otimes \ket{+}\bra{+}^{\otimes h}) )V^\dagger)], \\
& = U\rho_{\text{x}}U^\dagger ,
\end{align}
where $\rho_{\text{x}}$ is the density matrix for an arbitrary input pure state on $\mathcal{H}_\text{x}$.
Furthermore,  $V \in \mathcal{C}_3$ is the unitary portion of $\mathcal{E}_{\text{out}}$, and $\varepsilon_{\text{post}}$ is a quantum channel that is associated with the sequence of Pauli-$X$ measurements and subsequent classically controlled Clifford gates, $C_1,C_2,\dots,C_h$, seen in Fig.~\ref{fig_overview}. 

We emphasize that later stages of compiling will make use of a framework valid only for CNOT + $T$ circuits, which makes Hadamard gates an obstacle.
There are two commonly used methods for dealing with Hadamard gates: first, we can partition the quantum circuit into alternating $\langle CNOT, T \rangle$ and $\langle H \rangle$ subcircuits and optimize each CNOT + T subcircuit independently \cite{20_Amy_13}.
The second way is to replace each Hadamard gate with a gadget (see for example references \cite{27_bremner_10,28_montanaro_17}) that makes use of extra resources (ancillas, measurements and feedforward).
The central portion of the gadget contains all of the non-Clifford behaviour and is in the CNOT + T gate set, so is directly compatible with our $T$-optimizers.
The remainder of this section focusses on the second method (Hadamard gadgetization), but we discuss the Hadamard-bounded partitioning method in more detail in appendix~\ref{ap_bench}.

Each\textsuperscript{\footnote{To be precise, gadgets are only need for internal Hadamards.
The external Hadamards that appear at the beginning and end of the circuit do not need to be replaced with Hadamard gadgets.}} of the $h$  Hadamard gates is replaced by a \emph{Hadamard-gadget} (as shown in panel 1) of Fig.~\ref{f_comm}.
A Hadamard-gadget consists of a CNOT + $T$ block followed by a Pauli-$X$ gate conditioned on the outcome of measuring a \emph{Hadamard-ancilla} (a qubit in the y register initialized in the $\ket{+}$ state) in the Pauli-$X$ basis, so the size of the y register is $h$.
After Hadamard-gadgetisation, we commute the classically controlled Pauli-$X$ gates to the end of the circuit, starting with the right-most and iteratively working our way left  (see panel 3 of Fig.~\ref{f_comm}).
The end result is a circuit composed of a single CNOT+$T$ block on $n+h$ qubits, followed by a sequence of classically controlled Clifford operators conditioned on Pauli-$X$ measurements.
The latter sequence of non-unitary gates constitutes the circuit $\mathcal{E}_{\text{post}}$.
This method of circumventing Hadamards is preferred over forming Hadamard-bounded partitions as in previous works \cite{20_Amy_13} because it allows us to convert most of the input circuit into the optimization-compatible gate set, which we find leads to better performance of the \emph{T-Optimiser} subroutine (see appendix \ref{ap_bench} for numerical evidence of this).

\begin{figure}[h!]
	\includegraphics[width=\linewidth]{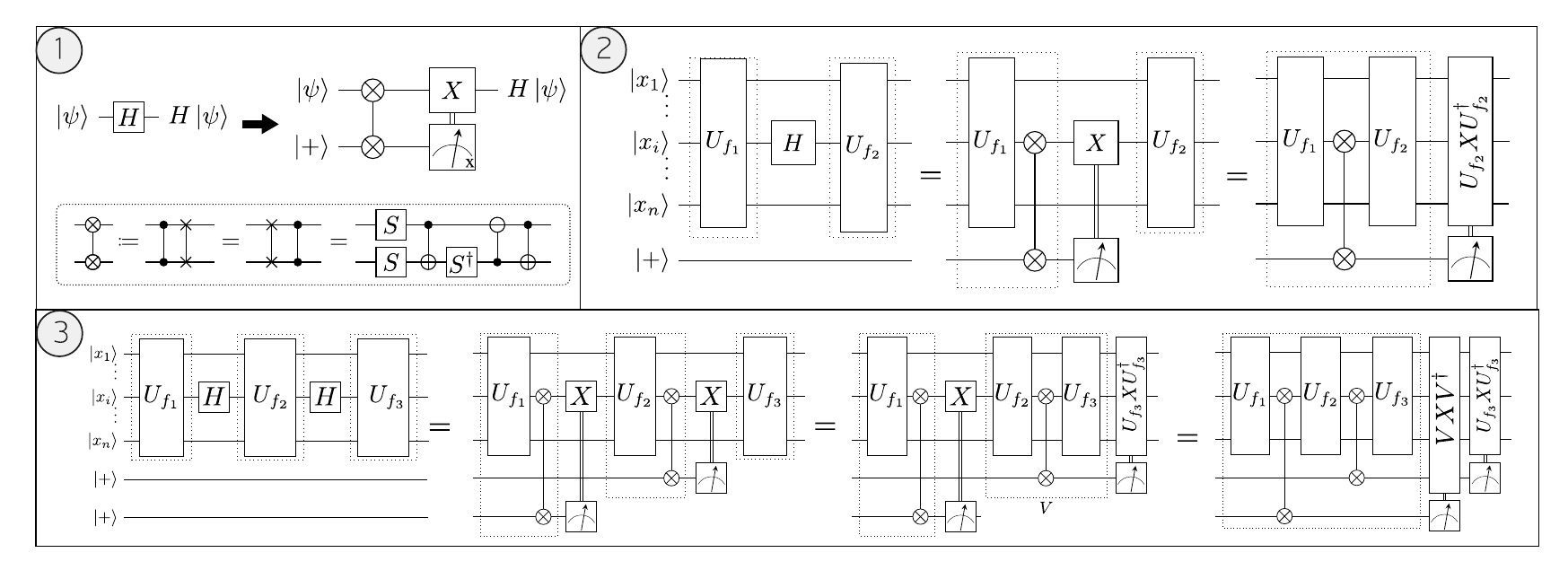}
	\caption{Hadamard gates are replaced by Hadamard-gadgets according to the rewrite in the upper part of panel 1).
			In the lower part, we define notation for the phase-swap gate and provide an example decomposition into the CNOT + $T$ gate set.
			Panel 2) shows an example of a Hadamard gate swapped for a Hadamard-gadget where the classically controlled Pauli-$X$ gate is commuted through $U_{f_2}$ to the end.
			The CNOT + $T$ -only region increases as shown by the dotted lines.
			As $U_{f_2}\in  \mathcal{C}_3$, it follows that $U_{f_2} X U_{f_2}^\dagger \in \mathcal{C}_2$ as per equation~\eqref{e_heir}, so has a $T$-count of $0$.
			The example in panel 3) shows the same process as 2) but for 2 internal Hadamards.
			As $\mathcal{D}_3$ is a group, the operator $V\in\mathcal{D}_3$ and the second Pauli-$X$ gate can also commute to the end to form a Clifford.
			This leads to a decomposition of the form in panel 2) of Fig.~\ref{fig_overview}.}
	\label{f_comm}
\end{figure}

Once the internal Hadamards are removed, we are left with a CNOT + $T$ circuit that implements unitary $V$, whose action on the computational basis is fully described~\cite{20_Amy_13,22_amy_16,30_amy_13,24_campbell_17} by two mathematical objects: a \emph{phase function}, $f: \mathbb{Z}_2^n \mapsto \mathbb{Z}_8$, and an invertible matrix $E \in \mathbb{Z}_2^{(n,n)}$, such that
\begin{equation}
V\ket{\mathbf{x}} = \omega^{f(\mathbf{x})}\ket{E\mathbf{x}} 
\end{equation}
where $\omega = e^{i\frac{\pi}{4}}$. It has been shown~\cite{22_amy_16,24_campbell_17} that $V=U_E U_f$ where  $U_f \in \mathcal{D}_3$ can be implemented with a diagonal CNOT + $T$ circuit and gives the phase 
\begin{equation}
\label{e_U_f_wo}
U_f\ket{\mathbf{x}} = \omega^{f(\mathbf{x})}\ket{\mathbf{x}},
\end{equation}
and $U_E$ can be implemented with CNOTs.

\subsection{Diagonal CNOT+T Framework}
\label{ssec_ciag}

In section \ref{ssec_preprocess}, we isolated all the non-Clifford behaviour of a Clifford + $T$ circuit within a diagonal CNOT + $T$ circuit defined on a larger qubit register. This method allows us to map the $T$ gate optimization problem for any Clifford + $T$ circuit to the following.
\begin{problem}{\textbf{(T-OPT)}}
	\label{p_topt}
	Given a unitary $U_f \in \mathcal{D}_3$, find a circuit decomposition $\mathcal{U}_f \in \langle CNOT, T, S \rangle$ that implements $U_f$ with minimal uses of the $T$ gate.
\end{problem}
\noindent
This section describes how we map the T-OPT problem from the quantum circuit picture to an algebraic problem following stages 4-6 of Fig.~\ref{fig_overview}. 
Throughout this section we use the framework for diagonal CNOT+T circuits (also called \emph{linear phase operators} \cite{22_amy_16}) introduced in reference \cite{30_amy_13} and built upon in \cite{20_Amy_13,22_amy_16, 24_campbell_17}.
We proceed by recalling from equation~\eqref{e_U_f_wo} that the action of any $U_f\in \mathcal{D}_3$ on the computational basis is given by $U_f \ket{\mathbf{x}} = \omega^{f(\mathbf{x})}\ket{\mathbf{x}}$ and that $U_f$ is completely characterized by the phase function, $f$.
A phase function can be decomposed into a sum of linear, quadratic and cubic monomials on the Boolean variables $x_i$.
Each monomial of order $r$ has a coefficient in $\mathbb{Z}_8$ and is weighted by a factor $2^{r-1}$, as in the following:
\begin{equation}
\label{eq_wp}
f(\mathbf{x}) = \sum_{\alpha=1}^{n}l_{\alpha}x_\alpha + 2\sum_{\alpha<\beta}^{n} q_{\alpha,\beta}x_\alpha x_\beta + 4\sum_{\alpha<\beta<\gamma}^{n}c_{\alpha,\beta,\gamma}x_\alpha x_\beta x_\gamma \pmod{8},
\end{equation}
where $l_{\alpha},q_{\alpha,\beta},c_{\alpha,\beta,\gamma} \in \mathbb{Z}_8$.
We refer to decompositions of $f$ that take the form of equation~\eqref{eq_wp} as \emph{weighted polynomials} as in reference \cite{24_campbell_17}, in which 
it was shown that $U_{2f}=U_f^2 \in \mathcal{C}_2$ for any weighted polynomial, $f$. 
This implies that any two unitaries with weighted polynomials whose coefficients all have the same parity are Clifford equivalent.
Note that the weighted polynomial can be lifted directly from the circuit definition of $U_f$ if we work in the $\{T, CS, CCZ\}$ basis, as each kind of gate corresponds to the linear, quadratic and cubic terms, respectively.

In stage 4 of Fig.~\ref{fig_overview}, we define the \emph{signature tensor}, $S^{(U_f)} \in \mathbb{Z}_2^{(n,n,n)}$, to be a symmetric tensor of order 3 whose elements are equal to the parity of the weighted polynomial coefficients of $U_f$ according to the following relations:
\begin{subequations}
	\label{eq_cef_sig}
	\begin{align} 
	S_{\sigma(\alpha,\alpha,\alpha)}& = S_{a,a,a} = l_{\alpha} &\pmod{2} \label{e_cef_sig_1}\\
	S_{\sigma(\alpha,\beta,\beta)}& = S_{\sigma(\alpha,\alpha,\beta)} = q_{\alpha,\beta} &\pmod{2} \label{e_cef_sig_2}\\
	S_{\sigma(\alpha,\beta,\gamma)}& = c_{\alpha,\beta,\gamma} &\pmod{2} \label{e_cef_sig_3}
	\end{align}
\end{subequations}
for all permutations of the indices, denoted $\sigma$.  It follows that any two unitaries with the same signature tensor are Clifford equivalent.

We recall the definition of gate synthesis matrices from reference \cite{24_campbell_17}, where a matrix, $A$ in $\mathbb{Z}_2^{(n,m)}$, is a gate synthesis matrix for a unitary $U_f$ if it satisfies,
\begin{equation}
 \label{eq_gsm}
f(\mathbf{x}) = |A^T\mathbf{x}| \pmod{8} = \sum_j  \left[ \bigoplus_{i} A_{i,j}  x_i \right] \pmod{8}
\end{equation}
where $|.|$ is the Hamming weight of a binary vector.  Notice that inside the square brackets is evaluated modulo 2 and outside is evaluated modulo 8.

Obtaining a gate synthesis matrix from a quantum circuit is best understood via the phase polynomial representation. A phase polynomial of a phase function, $f$, is a set, $P_f=\{\{\lambda_1, a_1\}, \{\lambda_2, a_2\}, \dots, \{\lambda_p,a_{|P|}\}\}$, of linear boolean functions $\lambda_k(\mathbf{x})$, together with coefficients $a_k\in\mathbb{Z}_8$ such that
\begin{equation}
\label{eq_PP}
f(\mathbf{x}) = \sum_{k=1}^{|P_f|}a_k\lambda_k(\mathbf{x}) \pmod{8}.
\end{equation}
A phase polynomial can be extracted  from a diagonal CNOT + $T$ circuit by 
 tracking the action of each gate on the computational basis states through the circuit~\cite{20_Amy_13,30_amy_13}.
We then map $P_f$ to an $A$ matrix with a procedure such as the following.  Start with an empty $A$ matrix. Then for each $\{\lambda_k,a_k\} \in P_f$,
\begin{enumerate}
\item Define column vector, $\mathbf{v}\in\mathbb{Z}_2^n$, such that $\lambda_k(\mathbf{x}) = v_1x_1 \oplus v_2x_2 \oplus \dots \oplus v_nx_n$.
\item Add $a_k$ copies of $\mathbf{v}$ to the right-hand end of $A$.
\end{enumerate}
We define a \emph{proper} gate synthesis matrix to be an $A$ matrix with no all-zero or repeated columns, and we define the function $\textsc{proper}$ such that $A^\prime = \textsc{proper}(A)$ is the proper gate synthesis matrix formed by removing all all-zero columns and pairs of repeated columns from $A$. The purpose of this function is to strip away the Clifford behaviour from the gate synthesis matrix.

We will exploit the key property of $A$ matrices described in the following lemma, which is a corollary of lemma 2 of reference \cite{30_amy_13}.
\begin{lemma}
	\label{l_gsm2circ}
	Let $U_f\in\mathcal{D}_3$ be a unitary with phase function $f(\mathbf{x})=|A^T\mathbf{x}|$ and $A^\prime = \textsc{proper}(A) \in \mathbb{Z}_2^{(n,m)}$. It follows that one can generate a circuit that implements $U_f$ with $m=\mathrm{col}(A')$ uses of the $T$ gate.
\end{lemma}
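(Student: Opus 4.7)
The plan is to build the circuit column-by-column from $A'$, using exactly one $T$ gate per column, and then to argue that the weighted polynomial produced differs from $f$ only by terms that can be realized with Clifford gates.

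First I would construct a single-column gadget. Fix a column index $j$ and let $\mathbf{a}_j$ be the $j$-th column of $A'$. Since $A'$ is proper, $\mathbf{a}_j$ is nonzero, so I can pick $k$ with $A'_{k,j}=1$. I would then form the sequence of CNOTs that, for every $i\neq k$ with $A'_{i,j}=1$, targets qubit $k$ with control on qubit $i$; call this product of CNOTs $C_j$. A direct basis-state calculation shows $C_j\ket{\mathbf{x}}=\ket{\mathbf{x}'}$ with $x'_k=\lambda_j(\mathbf{x})$ where $\lambda_j(\mathbf{x})=\bigoplus_i A'_{i,j}x_i$, and the other bits unchanged. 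Defining $W_j := C_j^{\dagger}\,T_k\,C_j$, I get $W_j\ket{\mathbf{x}}=\omega^{\lambda_j(\mathbf{x})}\ket{\mathbf{x}}$ and $W_j$ uses exactly one $T$ gate.

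Next I would concatenate $\mathcal{U}':=W_m W_{m-1}\cdots W_1$, which acts as $\mathcal{U}'\ket{\mathbf{x}}=\omega^{f'(\mathbf{x})}\ket{\mathbf{x}}$ with $f'(\mathbf{x})=\sum_{j=1}^{m}\lambda_j(\mathbf{x})\pmod 8$, i.e.\ $f'(\mathbf{x})=|(A')^T\mathbf{x}|\pmod 8$. By construction, $\mathcal{U}'$ uses exactly $m=\mathrm{col}(A')$ applications of the $T$ gate. Now I compare $f$ and $f'$ by inspecting what $\textsc{proper}$ discards from $A$: every all-zero column contributes $\lambda_j(\mathbf{x})=0$ to $|A^T\mathbf{x}|$ and so does not affect the sum modulo $8$; every pair of identical columns $\mathbf{a}_{j_1}=\mathbf{a}_{j_2}$ contributes $2\lambda_{j_1}(\mathbf{x})\pmod 8$, which is always even. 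Hence $f(\mathbf{x})-f'(\mathbf{x})\equiv 0 \pmod 2$ for every $\mathbf{x}\in\mathbb{Z}_2^n$.

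Finally I would invoke the Clifford-correction step. Because $f-f'$ takes only even values modulo $8$, the diagonal unitary $U_{f-f'}=U_f U_{f'}^{\dagger}$ lies in $\mathcal{D}_2\subseteq\mathcal{C}_2$; in weighted-polynomial language it is of the form $U_{2g}=U_g^{2}$, which the paper has already noted is Clifford. I can therefore realize $U_{f-f'}$ using only $\{S,CS,CNOT\}$ (no $T$ gates), prepend it to $\mathcal{U}'$, and obtain a circuit $\mathcal{U}_f := U_{f-f'}\,\mathcal{U}'$ implementing $U_f$ with exactly $m$ uses of $T$.

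The main obstacle is the bookkeeping in the third step: one has to be careful that the outer sum in $|A^T\mathbf{x}|$ is taken in $\mathbb{Z}_8$ while each $\lambda_j$ is Boolean, so that removing a pair of identical columns really subtracts $2\lambda_{j_1}(\mathbf{x})$ (never $-2$ or some wraparound artifact) from $f$. Once this is confirmed, the even-parity observation immediately places the residual $U_{f-f'}$ in the Clifford group and completes the argument without introducing further $T$ gates.
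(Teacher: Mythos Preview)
Your proof is correct, but it takes a slightly different route than the paper. The paper does not work only with $A'$ and then append a Clifford correction; instead it synthesizes \emph{every} column of the original $A$ with the same CNOT-conjugation gadget you describe, and then observes that for any pair of duplicate columns the combined phase is $\omega^{2\lambda_j(\mathbf{x})}$, which is exactly what a single $S$ gate (instead of two $T$ gates) produces on the target qubit. Hence only the columns that survive in $A'$ actually consume a $T$ gate, and the count is $m=\mathrm{col}(A')$ directly. Your version instead drops the duplicate and zero columns up front, builds $U_{f'}$ with $m$ $T$ gates, and then invokes the general fact that $U_{f-f'}\in\mathcal{D}_2$ to patch the discrepancy with Cliffords. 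Both arguments are valid; the paper's is more explicitly constructive (it tells you which Clifford to use, namely $S$, column by column), while yours leans on the Clifford-correction machinery that the paper develops elsewhere (Appendix~\ref{ap_Cliff}). Your bookkeeping worry in the third step is not an issue: each $\lambda_j(\mathbf{x})\in\{0,1\}$, so a removed duplicate pair contributes exactly $2\lambda_j(\mathbf{x})\in\{0,2\}$ to $f-f'$ modulo~8, with no wraparound.
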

\begin{proof}
	First, we note from the definition of $A$ in equation~\eqref{eq_gsm} that the $j$\textsuperscript{th} column of $A$ leads to a factor of $\omega^{\lambda_j(\mathbf{x})}$ appearing in the diagonal elements of $U_f$ as written in equation~\eqref{e_U_f_wo}, where $\lambda_j$ is a reversible linear Boolean function given by,
	\begin{equation}
	\lambda_j(\mathbf{x}) = A_{1,j}x_1 \oplus A_{2,j}x_2 \oplus \dots \oplus A_{n,j}x_n. 
	\end{equation}
	The action of a circuit generated by CNOT gates on computational basis state $\ket{\mathbf{x}}$ is to replace the value of each qubit with a reversible linear Boolean function on $x_1, x_2, \dots, x_n$. 
	Next, we show how to add the phase $\omega^{	\lambda_j(\mathbf{x})}$.
	We define $B_j$ to be a CNOT unitary such that after applying $B_j$ the first qubit is mapped $\ket{x_1}\rightarrow \ket{	\lambda_j(\mathbf{x})}$.
	A $T$ gate subsequently applied to this qubit will now produce the desired phase. We then uncompute $B_j$ by reversing the order of the CNOT gates.
	This procedure is repeated for every $j$ until all columns of $A$ have been implemented in this way.
	Only the columns of $A$ that also appear in $A^\prime$ require the use of a $T$ gate as all other columns have duplicates, where any pair of duplicates can be implemented by replacing the $T$ gate with an $S$ gate in the above procedure.
	Therefore the $T$ count is equal to $m=\text{col}(A^\prime)$. 
\end{proof}

The signature tensor of $U_f$ can be determined from an $A$ matrix of $U_f$ using the following relation,
\begin{equation}
	\label{eq_sig}
	S^{(A)}_{\alpha,\beta,\gamma} = \sum_{j=1}^m A_{\alpha,j}A_{\beta,j}A_{\gamma,j} \pmod{2}.
\end{equation}
Therefore, the gate synthesis problem T-OPT reduces to the following tensor rank problem.
\theoremstyle{problem}
\begin{problem}{\textbf{(3-STR)}}
	Given a symmetric tensor of order 3, $S\in \mathbb{Z}_2^{(n,n,n)}$, find a matrix $A \in \mathbb{Z}_2^{(n,m)}$ that satisfies equation~\eqref{eq_sig} with minimal $m$.   
\end{problem}
\noindent
Any algorithm attempting to solve 3-STR can be used in stage 5 of Fig.~\ref{fig_overview}.  The observation that T-OPT reduces to 3-STR is not new as it follows directly from earlier work. Amy and Mosca~\cite{22_amy_16} proved that T-OPT is equivalent to minimum distance decoding of the punctured Reed-Muller code of order $n-4$ and length $n$ (often written as $RM^*(n-4, n)$).  Furthermore, in 1980 Seroussi and Lempel~\cite{23_seroussi_80} recognised that this Reed-Muller decoding problem is equivalent to 3-STR and conjectured that this is a hard computational task.  A non-symmetric generalisation of 3-STR has been proved to be NP-complete~\cite{32_haastad_90}, giving further weight to the conjecture.  This imposes a practical upper bound on the number of qubits, $n_{RM}$, over which circuits can be optimally synthesized.

The problem 3-STR is closely related to
\begin{problem}{\textbf{(2-STR)}}
	Given a symmetric tensor of order 2, $S\in \mathbb{Z}_2^{(n,n)}$, find a matrix $A \in \mathbb{Z}_2^{(n,m)}$ that satisfies
	\begin{equation}
	\label{eq_sig2}
	S^{(A)}_{\alpha,\beta} = \sum_{j=1}^m A_{\alpha,j}A_{\beta,j} \pmod{2}.
	\end{equation}
	 with minimal $m$.  
\end{problem}
\noindent
This could also be stated as a matrix factorisation  $S=AA^T$ problem.
As such, we say any $A$ satisfying $S=AA^T$ is a factor of $S$ and a minimal factor is one with the minimum possible number of columns.
As is often the case in complexity theory, the matrix variant of the problem is considerably simpler than the higher order tensor variant.
Lempel gave an algorithm that finds an optimal solution to 2-STR in polynomial time~\cite{25_lempel_75}.
We call this Lempel's factoring algorithm and for completeness describe it in App.~\ref{ap_lempel}.
Our main strategy to $T$ count optimisation is to take insights from Lempel's algorithm for 2-STR and apply them to 3-STR.
In doing so, our compilers will be efficient but lose the promise of optimality, instead providing approximate solutions to 3-STR and T-OPT.  

In the final stage (see 6 of Fig.~\ref{fig_overview}), we map the output matrix of stage 5 back to a diagonal CNOT + $T$ circuit, $\mathcal{U}_{f^\prime}$, that comprises $m$ instances of the $T$ gate using lemma \ref{l_gsm2circ}.
The circuit $\mathcal{U}_{f^\prime}$ implements a unitary $U_{f^\prime}=U_fU_{\text{Clifford}}$, where $U_{\text{Clifford}}$ is a diagonal Clifford factor.
The input weighted polynomial stored since step 4 contains sufficient information to generate a circuit for $U_{\text{Clifford}}^\dagger$ (see appendix \ref{ap_Cliff}), hence we recover the original unitary, $U_f=U_{f^\prime} U_{\text{Clifford}}^\dagger$.
The final part of step 6 constitutes replacing $\mathcal{U}_f$ with $(\mathcal{U}^\dagger_{\text{Clifford}}\circ\mathcal{U}_{f^\prime})$. At this stage, the protocol terminates returning the final output, $\mathcal{E}_{\text{out}} = ( \mathcal{U}^\dagger_{\text{Clifford}}\circ\mathcal{U}_{f^\prime} \circ \mathcal{U}_E \circ  \mathcal{E}_{\text{post}} )$.

\section{\emph{T-optimiser}}
\label{s_topt}
Until now the \emph{T-optimiser} subroutine of our protocol has been treated as a black box whose input is a signature tensor $S$ and the output is a gate synthesis matrix $A$ with few columns. In this section, we describe the inner workings of the various \emph{T-optimiser}s we have implemented in this work. 

\subsection{Reed-Muller decoder (RM)}
Although Reed-Muller decoding is believed to be hard, a brute force solver can be implemented for a small number of qubits. We implement such a brute force decoder and found its limit to be $n_{\text{RM}}=6$.
To gain some intuition for the complexity of the problem, consider the following. The number of codespace generators for $RM^*(n-4, n)$ is equal to $N_G = \sum_{r=1}^{n-4}{{n}\choose{r}}$.
Therefore, the size of the search space is $N_{\text{search}} = 2^{N_G}$.
On a processor with a clock speed of 3.20GHz, generously assuming we can check one codeword per clock cycle, it would take over $91$ years to exhaustively search this space for $n=7$.
Performing the same back-of-the-envelope calculation for $n=6$, it would take $\approx 7\times 10^{-4}$ seconds.
In practice, we find the brute force decoder executes in around 10 minutes for $n=6$, so the time for $n=7$ would be significantly worse.
Clearly, we need to develop heuristics for this problem.

\subsection{Recursive Expansion (RE)}
The simplest means of efficiently obtaining an $A$ matrix for a given signature tensor $S$ is to make use of the modulo identity $2ab = a + b - a\oplus b$.
More concretely, for each non-zero coefficient in the weighted polynomial $l_\alpha$, $q_{\alpha,\beta}$, $c_{\alpha,\beta,\gamma}$, make the following substitutions to the corresponding monomials:
\begin{align}
x_\alpha &\rightarrow x_\alpha, \\
2x_\alpha x_\beta &\rightarrow x_\alpha + x_\beta - (x_\alpha \oplus x_\beta), \\
4x_\alpha x_\beta x_\gamma &\rightarrow
x_\alpha + x_\beta  + x_\gamma - (x_\alpha \oplus x_\beta) - (x_\alpha\oplus x_\gamma) - (x_\beta\oplus x_\gamma) + (x_\alpha \oplus x_\beta\oplus x_\gamma),
\end{align}
from which the corresponding $A$ matrix can be easily extracted.
We call this the \emph{recursive expansion} (RE) algorithm, which has been shown to yield worst-case $T$ counts of $O(n^3)$.
It is straightforward to understand this cubic scaling because any proper gate synthesis matrix resulting from the RE algorithm may include any column of Hamming weight 3 or less.
There are $\sum_{k=1}^3{{n}\choose{k}}=O(n^3)$ such columns so from lemma \ref{l_gsm2circ} there can be at most $O(n^3)$ $T$ gates in the corresponding circuit decomposition.

\subsection{Target Optimal by Order Lowering (TOOL)}
Campbell and Howard \cite{24_campbell_17} proposed an efficient heuristic for T-OPT that requires at most $O(n^2)$ $T$ gates compared to $O(n^3)$ of the best previous (RE) optimizer.
In the quantum circuit picture, the algorithm involves decomposing the input CNOT + $T$ circuit into a cascade of control-$U_{2\tilde{f}}$ operators where $\tilde{f}$ is quadratic rather than cubic.
Lowering the order in this way means that each control-$U_{2\tilde{f}}$ can be synthesized both efficiently and optimally using Lempel's factoring algorithm.
For this reason we call it the \emph{Target Optimal by Order Lowering} (TOOL) algorithm. Fig.~\ref{Fig_TOOLbasic} shows a single step of how TOOL  pulls out a single control-$U_{2\tilde{f}}$ operator, reducing the number of qubits non-trivially affected by the remaining unitary.
The process is repeated until the circuit is small enough to be solved using the RM algorithm.
The core of the algorithm was already outlined in previous work \cite{24_campbell_17} but for completeness App. \ref{App_TOOL} describes both plain TOOL and a variant called TOOL (with feedback).
This paper presents the first numerical results obtained from an implementation of TOOL.

\begin{figure}
	\includegraphics{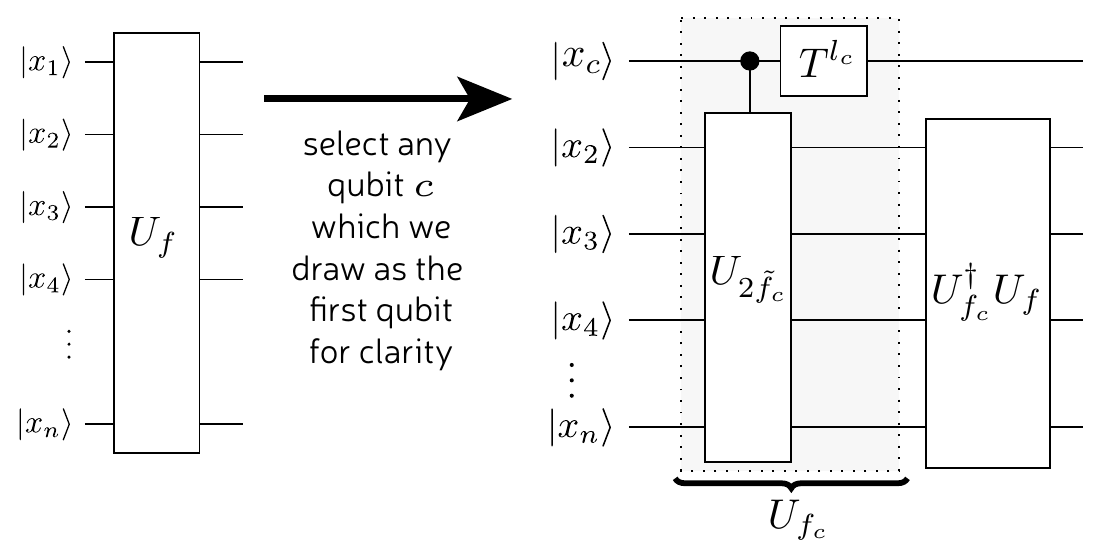}	
	\caption{A sketch of one round of TOOL (without feedback).
			We identify a sub-circuit $U_{f_c}$ with a single control qubit and then use that such a subcirciut can be efficiently and optimally compiled using Lempel's algorithm.
			The remaining circuit $U_{f_c}^\dagger U_f$ contains one fewer qubit and so the process can be iterated until the circuit is down to 6 qubits when it can be optimally compiled by brute force.}
	\label{Fig_TOOLbasic}
\end{figure}

\subsection{Third Order Duplicate and Destroy (TODD)}
\label{sec_TODD}
In this section, we present an algorithm based on Lempel's factoring algorithm \cite{25_lempel_75} that is extended to work for order 3 tensors.
Since this algorithm does not appear in any previous work, we will provide an extended explanation here.
This algorithm requires some initial $A$ matrix to be generated by another algorithm such as RE or TOOL, then it reduces the number of columns of the initial gate synthesis matrix iteratively until exit.
In section \ref{sec_results}, we present numerical evidence that it is the best efficient solver of the T-OPT problem developed so far.
We call this the \emph{Third Order Duplicate and Destroy} (TODD) algorithm because, much like the villainous Victorian barber, it shaves away at the columns of the input $A$ matrix iteratively until the algorithm finishes execution. Pseudo-code is provided in App. \ref{pseudocode}.

We begin by introducing the key mechanism through which TODD reduces the $T$ count of quantum circuits: by \emph{destroying} pairs of duplicate columns of a gate synthesis matrix, a process through which the signature tensor is unchanged, as shown in the following lemma.
\begin{lemma}{}
	\label{lemma_1}
	Let $A\in \mathbb{Z}^{(n,m)}$ be a gate synthesis matrix whose $a$\textsuperscript{th} and $b$\textsuperscript{th} columns are duplicates.
	Let $A_{\mathrm{des}}\in \mathbb{Z}^{(n,m-2)}$ be a gate synthesis matrix formed by removing the $a$\textsuperscript{th} and $b$\textsuperscript{th} columns of $A$.
	It follows that $S^{(A)}=S^{(A_{\mathrm{des}})}$ for any such $A$ and $A_{\mathrm{des}}$.
\end{lemma}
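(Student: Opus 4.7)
The plan is to prove the lemma by direct computation from the definition of the signature tensor in equation~\eqref{eq_sig}. The key observation is that the signature tensor is a sum over columns of $A$, evaluated modulo 2, so any contribution that appears an even number of times vanishes.

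First I would write out $S^{(A)}_{\alpha,\beta,\gamma}$ as the sum $\sum_{j=1}^m A_{\alpha,j}A_{\beta,j}A_{\gamma,j} \pmod 2$. Then I would split the sum into three pieces: the contribution from column $a$, the contribution from column $b$, and the contribution from the remaining $m-2$ columns, which is exactly $S^{(A_{\mathrm{des}})}_{\alpha,\beta,\gamma}$.

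Next I would use the hypothesis that columns $a$ and $b$ are duplicates, i.e.\ $A_{k,a} = A_{k,b}$ for all $k \in \{1,\dots,n\}$. In particular this gives $A_{\alpha,a}A_{\beta,a}A_{\gamma,a} = A_{\alpha,b}A_{\beta,b}A_{\gamma,b}$ for every triple of indices $(\alpha,\beta,\gamma)$. Therefore the sum of the two contributions from columns $a$ and $b$ equals $2 A_{\alpha,a}A_{\beta,a}A_{\gamma,a}$, which is $\equiv 0 \pmod 2$. Concluding, $S^{(A)}_{\alpha,\beta,\gamma} = S^{(A_{\mathrm{des}})}_{\alpha,\beta,\gamma}$ for every $(\alpha,\beta,\gamma)$, so the two tensors are equal.

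There is no real obstacle here; the lemma is essentially a restatement of the fact that the map $A \mapsto S^{(A)}$ factors through the multiset of columns taken modulo pair-cancellation, which is precisely what makes the \textsc{proper} operator introduced earlier well-defined. The only thing worth stating carefully is that the argument is symmetric under all permutations of $(\alpha,\beta,\gamma)$, so no case analysis on whether indices coincide is needed.
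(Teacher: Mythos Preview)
Your proof is correct and follows essentially the same approach as the paper: write $S^{(A)}$ via equation~\eqref{eq_sig}, split off the $a$\textsuperscript{th} and $b$\textsuperscript{th} column contributions, identify the remainder as $S^{(A_{\mathrm{des}})}$, and use the duplicate hypothesis to observe that the two separated terms sum to $2A_{\alpha,a}A_{\beta,a}A_{\gamma,a}\equiv 0\pmod 2$. There is nothing to add.
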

\begin{proof}
	We start by writing the signature tensor in terms of the elements of $A$ according to equation~\eqref{eq_sig},
	\begin{equation}
	S^{(A)}_{\alpha,\beta,\gamma} = \sum_{k=1}^{m}A_{\alpha,k}A_{\beta,k}A_{\gamma,k} \pmod{2},
	\end{equation}
	and separating the terms associated with $a,b$ from the rest of the summation,
	\begin{equation}
	S^{(A)}_{\alpha,\beta,\gamma} = \left( \sum_{j\in \mathcal{J}}A_{\alpha,j}A_{\beta,j}A_{\gamma,j} \right) + A_{\alpha,a}A_{\beta,a}A_{\gamma,a} + A_{\alpha,b}A_{\beta,b}A_{\gamma,b} \pmod{2},
	\end{equation}
	where $\mathcal{J} = \left[1,m\right]\setminus \{a, b\}$, so that
		\begin{equation}
			\label{proof_1_2}
	S^{(A)}_{\alpha,\beta,\gamma} = S^{(A_{\mathrm{des}})}_{\alpha,\beta,\gamma}  + A_{\alpha,a}A_{\beta,a}A_{\gamma,a} + A_{\alpha,b}A_{\beta,b}A_{\gamma,b} \pmod{2},
	\end{equation}
	As stated in the lemma, the $a$\textsuperscript{th} and $b$\textsuperscript{th} columns of $A$ are duplicates and so
	\begin{equation}
		\label{proof_1_3}
		A_{i,a} = A_{i,b}\ \forall\ i \in \left[1,n\right].
	\end{equation}
	Now substitute equation~\eqref{proof_1_3} into equation~\eqref{proof_1_2},
	\begin{align}
	S^{(A)}_{\alpha,\beta,\gamma} & = S^{(A_{\mathrm{des}})}_{\alpha,\beta,\gamma} + 2A_{\alpha,a}A_{\beta,a}A_{\gamma,a}  \pmod{2}  \\
	&= S^{(A_{\mathrm{des}})}_{\alpha,\beta,\gamma} \pmod{2}
	\end{align}
	where the last step follows from modulo 2 addition.
\end{proof}
Lemma \ref{lemma_1} gives us a simple means to remove columns from a gate synthesis matrix by destroying pairs of duplicates columns and thereby reducing the $T$ count of a CNOT + $T$ circuit by 2.
However, it is often the case that the $A$ matrix does not already contain any duplicate columns.
Therefore, we wish to perform some transformation: $A \rightarrow A'$ such that
\begin{enumerate}
	 \item[(\textit{a})] $A'$ has duplicate columns;
	 \item[(\textit{b})] the transformation preserves the signature tensor of $A$.
\end{enumerate}	
In the following lemma we introduce a class of transformations that \emph{duplicate} a particular column of an $A$ matrix such that property (\textit{a}) is met.
We then use lemma \ref{lem1} to establish what conditions must be satisfied for the duplication transformation to have property  (\textit{b}).

\begin{lemma}{}
	\label{lem2}
	Let $A\in\mathbb{Z}_2^{(n,m)}$ be a proper gate synthesis matrix.
	For some choice of $a$ and $b$, let $\mathbf{c}_a(A)$ and $\mathbf{c}_b(A)$ denote the $a$\textsuperscript{th} and $b$\textsuperscript{th} columns of $A$ and define $\mathbf{z}= \mathbf{c}_a(A) \oplus \mathbf{c}_b(A)$ . 
	Let $\mathbf{y} \in \mathbb{Z}_2^m$ be any vector such that $y_a \oplus y_b = 1$.
	We consider duplication transformations of the form $A \rightarrow A^\prime = A \oplus \mathbf{z}\mathbf{y}^T$.
	It follows that the $a$\textsuperscript{th} and $b$\textsuperscript{th} columns of $A^\prime$ are duplicates and so property (\textit{a}) holds.
\end{lemma}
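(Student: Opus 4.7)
The plan is to simply unpack the definitions and do a short column-wise computation; the claim reduces to observing that the outer product $\mathbf{z}\mathbf{y}^T$ acts on each column independently, and that $\mathbf{z}$ is precisely the vector that interchanges $\mathbf{c}_a(A)$ and $\mathbf{c}_b(A)$ under XOR.

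First I would note that the $j$-th column of $A' = A \oplus \mathbf{z}\mathbf{y}^T$ is $\mathbf{c}_j(A') = \mathbf{c}_j(A) \oplus y_j \mathbf{z}$, since the $j$-th column of the outer product $\mathbf{z}\mathbf{y}^T$ is simply $y_j \mathbf{z}$. Next I would use the hypothesis $y_a \oplus y_b = 1$ to split into the two cases $(y_a, y_b) = (1, 0)$ and $(y_a, y_b) = (0, 1)$, which can be written as a single case by symmetry. In the first case, $\mathbf{c}_a(A') = \mathbf{c}_a(A) \oplus \mathbf{z} = \mathbf{c}_a(A) \oplus \mathbf{c}_a(A) \oplus \mathbf{c}_b(A) = \mathbf{c}_b(A)$, while $\mathbf{c}_b(A') = \mathbf{c}_b(A) \oplus 0 \cdot \mathbf{z} = \mathbf{c}_b(A)$, so the two columns agree. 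The other case is symmetric: the roles of $a$ and $b$ simply swap, and the shared value becomes $\mathbf{c}_a(A)$ instead.

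There is no real obstacle here; the only thing to watch is that all arithmetic is in $\mathbb{Z}_2$, so $\mathbf{c}_a(A) \oplus \mathbf{c}_a(A) = 0$ is used freely, and that the definition $\mathbf{z} = \mathbf{c}_a(A) \oplus \mathbf{c}_b(A)$ is self-inverse under XOR, making $\mathbf{z}$ the unique vector that swaps the two columns. The propriety assumption on $A$ and the fact that $y_j$ for $j \notin \{a, b\}$ is unconstrained play no role in this lemma, which is purely about the duplication property (\textit{a}); they will only matter when verifying property (\textit{b}) in the next step of the paper. I would end by remarking that $\mathbf{c}_a(A') = \mathbf{c}_b(A')$ is exactly the hypothesis of Lemma~\ref{lemma_1}, so that after the duplication transformation one can immediately destroy the pair and shrink the gate synthesis matrix by two columns, provided the signature-preservation condition can be arranged.
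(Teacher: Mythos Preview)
Your proof is correct and follows essentially the same approach as the paper: both compute the $j$-th column of $A'$ as $\mathbf{c}_j(A)\oplus y_j\mathbf{z}$ and then verify that columns $a$ and $b$ coincide using $y_a\oplus y_b=1$ together with $\mathbf{z}=\mathbf{c}_a(A)\oplus\mathbf{c}_b(A)$. The only cosmetic difference is that you split into the two cases $(y_a,y_b)\in\{(1,0),(0,1)\}$, whereas the paper substitutes $y_b=y_a\oplus1$ and handles both cases in a single algebraic line; neither version gains anything substantive over the other.
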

\begin{proof}
	We begin by finding expressions for the matrix elements of $A^\prime$ in terms of $A$, $\mathbf{z}$ and $\mathbf{y}$,
	\begin{equation}
	A^\prime_{i,j} = A_{i,j} \oplus z_i y_j,
	\end{equation}
	and substitute the definition of $\mathbf{z}$,
	\begin{equation}
	A^\prime_{i,j} = A_{i,j} \oplus (A_{i,a}\oplus A_{i,b}) y_j.
	\end{equation}
	Now we can find the elements of the columns $a$ and $b$ of $A^\prime$,
	\begin{align}
	A^\prime_{i,a} &= A_{i,a} \oplus (A_{i,a}\oplus A_{i,b}) y_a,\\
	A^\prime_{i,b} &= A_{i,b} \oplus (A_{i,a}\oplus A_{i,b}) y_b.
	\label{e_working2}
	\end{align}
	We substitute in the condition $y_b = y_a \oplus 1$ into equation~\eqref{e_working2},
	\begin{align}
	\begin{split}
	A^\prime_{i,b} &= A_{i,b} \oplus (A_{i,a}\oplus A_{i,b}) (y_a \oplus 1) \\
	&= A_{i,b} \oplus (A_{i,a} \oplus A_{i,b})y_a \oplus A_{i,a} \oplus A_{i,b} \\
	&= A_{i,a} \oplus (A_{i,a}\oplus A_{i,b})y_a \\
	& = A^\prime_{i,a},
	\end{split}
	\label{eq_lem2_last}		
	\end{align}
	where the two $A_{i,b}$ terms cancel in the second step of equation~\eqref{eq_lem2_last}.
\end{proof}

		\begin{lemma}{}
			\label{lem1}
			Consider a duplication transformation of the form $A \rightarrow A^\prime = A \oplus \mathbf{z}\mathbf{y}^T$ where $\mathbf{z}$, $\mathbf{y}$ are vectors of appropriate length.
			It follows that $S^{(A)} = S^{(A^\prime)}$ (satisfying property (\textit{b})) if the following conditions hold true:
			\begin{enumerate}
				\item [C1:] $\quad |\mathbf{y}| = 0\pmod{2}$
				\item [C2:] $\quad A\mathbf{y} = \mathbf{0}$
				\item [C3:] $\quad \chi(A,\mathbf{z})\hspace{1mm}\mathbf{y} = \mathbf{0}$.
			\end{enumerate}
			where we define $\chi(A,\mathbf{z})$ as follows. Given some gate synthesis matrix, $A$, and a column vector $\mathbf{z}\in\mathbb{Z}_2^n$ let	$\chi$	be a matrix with rows labelled by $(\alpha, \beta, \gamma)$ and of the form
			\begin{equation}
			    \mathbf{R}_{\alpha, \beta, \gamma} =
				(z_ \alpha \mathbf{r_\beta}\wedge\mathbf{r_\gamma})\oplus (z_\beta \mathbf{r_\gamma}\wedge\mathbf{r_\alpha})\oplus (z_\gamma \mathbf{r_\alpha}\wedge\mathbf{r_\beta})
				\label{eq_chi_row}
			\end{equation}
			where $\mathbf{r}_{\alpha}$ is the $\alpha^{\text{th}}$ row of $A$, and $\mathbf{x}\wedge\mathbf{y}$ is the element-wise product of vectors $\mathbf{x}$ and $\mathbf{y}$.
			The order of the rows in $\chi$ is unimportant, but must include every choice of $\alpha, \beta, \gamma \in \mathbb{Z}_n$ with no pair of indices being equal.
		\end{lemma}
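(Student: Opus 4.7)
The plan is to expand the signature tensor of $A'$ directly using the definition $A'_{i,j} = A_{i,j} \oplus z_i y_j$ and group the resulting terms by how many factors of $z$ (equivalently, how many factors of $y$) they contain. Concretely, I would write
\begin{equation}
S^{(A')}_{\alpha,\beta,\gamma} = \sum_{j=1}^m (A_{\alpha,j}\oplus z_\alpha y_j)(A_{\beta,j}\oplus z_\beta y_j)(A_{\gamma,j}\oplus z_\gamma y_j) \pmod 2,
\end{equation}
expand the triple product in $\mathbb{Z}_2$, and use the fact that $y_j^2 = y_j$ since $y_j \in \{0,1\}$. The expansion naturally splits into four groups indexed by the number $k\in\{0,1,2,3\}$ of $z$-factors in each term, and the $k=0$ group reproduces $S^{(A)}_{\alpha,\beta,\gamma}$. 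The task then reduces to showing that the remaining three groups vanish modulo $2$ under the hypotheses C1, C2, C3.

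I would handle the groups in order of decreasing $k$. The $k=3$ contribution collapses to $z_\alpha z_\beta z_\gamma \sum_j y_j = z_\alpha z_\beta z_\gamma |\mathbf{y}|$, which is zero by C1. The $k=2$ contribution is a sum of three terms like $z_\beta z_\gamma \sum_j A_{\alpha,j} y_j = z_\beta z_\gamma (A\mathbf{y})_\alpha$, and all three vanish by C2. The $k=1$ contribution is the subtle one, equal to
\begin{equation}
z_\gamma\sum_j A_{\alpha,j}A_{\beta,j} y_j \;\oplus\; z_\beta \sum_j A_{\alpha,j}A_{\gamma,j} y_j \;\oplus\; z_\alpha \sum_j A_{\beta,j}A_{\gamma,j} y_j .
\end{equation}
Recognising that $\sum_j A_{\alpha,j}A_{\beta,j} y_j = (\mathbf{r}_\alpha \wedge \mathbf{r}_\beta)\cdot \mathbf{y}$, this expression is exactly $\mathbf{R}_{\alpha,\beta,\gamma}\cdot \mathbf{y}$, i.e.\ the $(\alpha,\beta,\gamma)$ entry of $\chi(A,\mathbf{z})\,\mathbf{y}$, which is zero by C3.

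The main obstacle is not any single step but rather keeping the bookkeeping of the eight expanded monomials clean, and in particular noticing that the $k=1$ piece is precisely what motivates the definition of the matrix $\chi(A,\mathbf{z})$ in equation~\eqref{eq_chi_row}. Once the expansion is organised by $z$-degree and the identifications $\sum_j y_j = |\mathbf{y}|$, $\sum_j A_{\alpha,j} y_j = (A\mathbf{y})_\alpha$, and $\sum_j A_{\alpha,j}A_{\beta,j} y_j = (\mathbf{r}_\alpha \wedge \mathbf{r}_\beta)\cdot \mathbf{y}$ are made, the three conditions C1--C3 match the three non-trivial groups one-for-one, and the proof is complete.
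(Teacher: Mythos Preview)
Your proposal is correct and follows essentially the same approach as the paper: expand $S^{(A')}_{\alpha,\beta,\gamma}$ using $A'_{i,j}=A_{i,j}\oplus z_i y_j$, collect the eight monomials by their $z$-degree, and kill the $k=3,2,1$ groups with C1, C2, C3 respectively, leaving $S^{(A)}_{\alpha,\beta,\gamma}$. The paper's proof is identical in structure and level of detail.
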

		\begin{proof}
			We begin by finding an expression for $S(A^\prime)$ using equation~\eqref{eq_sig},
			\begin{equation}
			S^{(A^\prime)}_{\alpha,\beta,\gamma} = \sum_{j=1}^{m}\left(A_{\alpha,j}\oplus z_\alpha y_j\right)\left(A_{\beta,j}\oplus z_\beta y_j\right)\left(A_{\gamma,j}\oplus z_\gamma y_j\right) \pmod{2},
			\end{equation}
			and expanding the brackets,
			\begin{align}
			\label{e_working1}
			\begin{split}
			S^{(A^\prime)}_{\alpha,\beta,\gamma} = \sum_{j=1}^{m}(&A_{\alpha,j}A_{\beta,j}A_{\gamma,j} \oplus z_\alpha z_\beta z_\gamma y_j  \\			
			&\oplus z_\alpha z_\beta A_{\gamma,j} y_j \oplus z_\beta z_\gamma A_{\alpha,j} y_j \oplus z_\gamma z_\alpha A_{\beta,j} y_j \\
			&\oplus z_\alpha A_{\beta,j} A_{\gamma,j} y_j \oplus z_\beta A_{\gamma,j} A_{\alpha,j} y_j \oplus z_\gamma A_{\alpha,j} A_{\beta,j} y_j) \pmod{2}.
			\end{split}
			\end{align}
			We can see that the first term of equation~\eqref{e_working1} summed over all $j$ is equal to $S^{(A)}$, by definition.
			The task is to show that the remaining terms sum to zero under the specified conditions.
			Next, we sum over all $j$ and substitute in the definitions of $|\mathbf{y}|$, $A\mathbf{y}$ and $\chi(A,\mathbf{z})\hspace{1mm}\mathbf{y}$,
			\begin{equation}
			S^{(A^\prime)}_{\alpha,\beta,\gamma} = S^{(A)}_{\alpha,\beta,\gamma} \oplus z_\alpha z_\beta z_\gamma |\mathbf{y}| \oplus z_\alpha z_\beta \left[A\mathbf{y}\right]_\gamma \oplus z_\beta z_\gamma \left[A\mathbf{y}\right]_\alpha \oplus z_\gamma z_\alpha \left[A\mathbf{y}\right]_\beta \oplus
			(			    \mathbf{R}_{\alpha, \beta, \gamma } \cdot \mathbf{y}).
			\end{equation}					
			By applying condition \emph{C1}, the second term is eliminated; by applying condition \emph{C2}, the next three terms are eliminated, and by applying condition \emph{C3}, the final term is eliminated.
		\end{proof}

		Having shown how to duplicate and destroy columns of a gate synthesis matrix, we are ready to describe the TODD algorithm, presented as pseudo-code in algorithm \ref{al_1}.
		Given an input gate synthesis matrix $A$ with signature tensor $S$, we begin by iterating through all column pairs of $A$ given by indices $a,b$.
		We construct the vector $\mathbf{z}=\mathbf{c}_a \oplus \mathbf{c}_b$ where $\mathbf{c}_j$ is the $j$\textsuperscript{th} column of $A$, as in lemma \ref{lem2}.
		We check to see if the conditions in lemma \ref{lem1} are satisfied for $\mathbf{z}$ by forming the matrix,
		\begin{equation}
		\tilde{A}=\begin{pmatrix}
		A \\ \chi(A,\mathbf{z})
		\end{pmatrix}.
		\end{equation}
		Any vector, $\mathbf{y}$, in the null space of $\tilde{A}$ simultaneously satisfies \emph{C2} and \emph{C3} of lemma \ref{lem1}.
		We scan through the null space basis until we find a $\mathbf{y}$ such that $y_a \oplus y_b = 1$.
		At this stage we know that we can remove at least one column from $A$, depending on the following cases
\begin{enumerate}
	\item[\textit{i}:] 		If $|\mathbf{y}|=0 \pmod{2}$ then condition \emph{C1} is satisfied and we can perform the duplication transformation from lemma \ref{lem1};
	\item[\textit{ii}:]    		If $|\mathbf{y}|=1 \pmod{2}$ then we force \emph{C1} to be satisfied by appending a 1 to $\mathbf{y}$ and an all-zero column to $A$ before applying the duplication transformation.
\end{enumerate}			
Finally, we use the function $\textsc{proper}$ as in App.~\ref{pseudocode} to destroy all duplicate pairs to maximize efficiency.
In case \textit{i}, at least two columns have been removed and in case \textit{ii}  at least one column has been removed \footnote{Other column pairs may be destroyed after the duplication transformation in addition to the $a$\textsuperscript{th} and $b$\textsuperscript{th} columns but only for the latter pair is destruction guaranteed.}.
This reduces the number of columns of $A$ and therefore the $T$ count of $U_f$.
We now start again from the beginning, iterating over columns of the new $A$ matrix.
The algorithm terminates if every column pair has been exhausted without success.

	\section{Results \& Discussion}
	\label{sec_results}
	We implemented our compiler, which we call \emph{TOpt}, in C++ including each variant of \emph{T-Optimiser} described in section~\ref{s_topt}, and tested it on two types of benchmark.
	First, we performed a random benchmark, in which we randomly sampled signature tensors from a uniform probability distribution for a range of $n$ and used them as input for the four versions of \emph{T-optimiser}: RE, TOOL (feedback), TOOL (without feedback) and TODD.
	The results for the random benchmark are shown in Fig.~\ref{fig_random}. 
	Second, we tested the compiler on a library of benchmark circuits taken from Dmitri Maslov's Reversible Logic Synthesis Benchmarks Page \cite{35_Maslov_web}, Matthew Amy's GitHub repository for T-par \cite{39_amy_web} and Nam et al's GitHub repository~\cite{auto_github} for reference~\cite{21_nam_17}.
	These circuits implement useful quantum algorithms including Galois Field multipliers, integer addition, n\textsuperscript{th} prime, Hamming coding functions and the hidden weighted bit functions.	
	The results for the quantum algorithm benchmark are listed in Table \ref{tab_CliffT}. For all benchmarks, the results were obtained on the University of Sheffield's Iceberg HPC cluster\cite{47_iceberg_web}.
	
	\subsection{Random Circuit Benchmark}
	
	We performed the random benchmark in order to determine the average case scaling of the $T$-count with respect to $n$ for each computationally efficient version of \emph{T-optimiser} with results shown in Fig. \ref{fig_random}.
	For both versions of TOOL, we find that the numerical results for the $T$ count follow the expected analytical scaling of $O(n^2)$ and correspondingly the results for RE scales as $O(n^3)$.
	We see that TODD slightly outperforms the next best algorithm, TOOL (without feedback) and is therefore the preferred algorithm in settings where classical runtime is not an issue.
	Furthermore, for all compilers the distribution of $T$-counts (for fixed $n$) concentrates around the mean value.
	Fig.~\ref{fig_random} includes error bars showing the distribution but they are too small to be clearly visible, so for one data point we highlight this with an inset histogram.
	Therefore, TODD performs better, not just on average, but on the vast majority of random circuits so far tested. 
	While both have a polynomial runtime, we found TOOL runs faster than TODD.
	Therefore, TOOL may have some advantage for larger circuits that are impractically large for TODD.
	However, TODD can always partition a very large circuit into several smaller circuits at the cost of being slightly less effective at reducing $T$ count.
	Consequently, for very large circuits, it is unclear which compiler will work best and running both is recommended.
 	
	The random benchmark effectively uses diagonal CNOT + T circuits. This gate set is not universal and therefore is computationally limited.
	However, these circuits are generated by $\{T, CS, CCZ\}$, which all commute.
	This means such circuits lie in the computational complexity class IQP (which stands for \emph{instantaneous quantum  polynomial-time}) that feature in proposals for quantum supremacy experiments \cite{27_bremner_10,37_harrow_17,38_shepherd_09}.
	Low cost designs of IQP circuits provided by our compiler would therefore be an asset for achieving quantum supremacy.
	
	\begin{figure}[h!]
		\centering
		\includegraphics{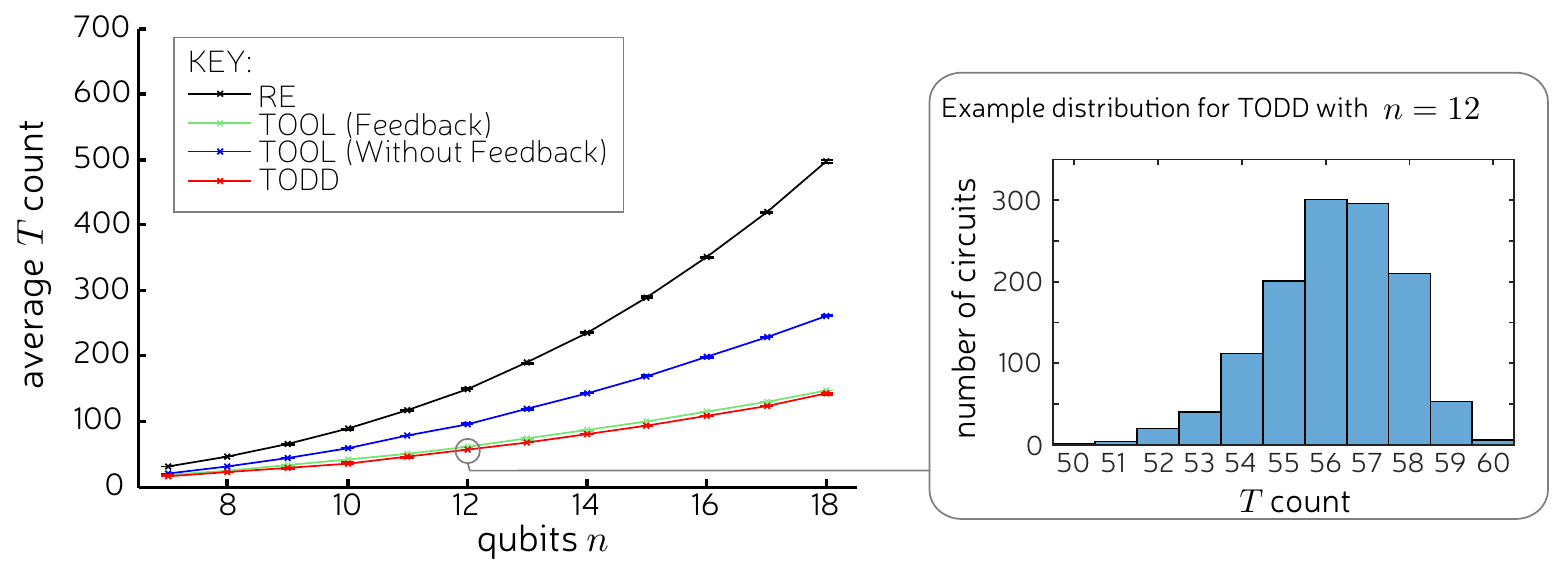}
		\caption{Circuits generated by the $\mathrm{CNOT}$ and $T$ gate were randomly generated for varying number of qubits $n$ then optimized by our implementations of RE, TOOL and TODD.
				The average $T$-count for each $n$ over many random circuits are shown on the vertical axis.
				TODD produces circuit decompositions with the smallest $T$-counts on average but scales the same as the next best algorithm, TOOL (Feedback).
				Both of these algorithms are better than RE by a factor $n$.
				The difference between the $T$-counts for TODD and TOOL (Feedback) seem to converge to a constant $5.5\pm 0.7$ for large $n$.}
		\label{fig_random}
	\end{figure}
	
	\FloatBarrier

	\subsection{Quantum Algorithms Benchmark}
		
	The results in Table \ref{tab_CliffT} show that the TODD algorithm reduced or preserved the $T$ count for every input quantum circuit upon which it was tested, as expected.
	Additionally,  TODD yields a positive saving over the best previous algorithm for all benchmarks except Mod $5_4$ with an average and maximum saving of 20\% and 51\%, respectively.
	This is immediately useful due to the lower cost associated with solving these problems.
	
	Crucially, the output circuits of our protocol often require a considerable number of ancilla qubits due to our use of Hadamard gadgets.
	This space-time trade-off is  justifiable when the cost of introducing an additional qubit is small in comparison to that of performing an additional $T$ gate \cite{44_maslov_16}.
	Furthermore, our compilers can be executed with a cap, $h_{\mathrm{cap}}$, on the size of the ancilla register by dividing the circuit into subcircuits containing no more than $h_{\mathrm{cap}}$  Hadamard gates.	
	A larger number of Hadamard gates generally leads to an increased classical compilation time for TODD as well as an increased $T$ count for TODD-part (see appendix \ref{ap_bench}), which naturally motivates future investigation into Hadamard gate optimization as a pre-processing step of TOpt-like compilers.
	Finally, further reductions in the space (and other) resource requirements may be possible by back-substituting the Hadamard gadget identity from Fig.~\ref{f_comm} post-optimization.
	
	The TOOL algorithms (with and without feedback) reduced $T$ counts below those of the best previous result for 18\% and 30\% of the benchmark circuits, respectively. But for the majority, we find that TOOL actually results in negative savings.
	This seems to contradict the result for the random benchmark (see Fig.~\ref{fig_random}) in which  TOOL (feedback) nearly performs as well as TODD.
	We offer the following explanation for this apparent contradiction.	
	The circuits generated as input for the random benchmark typically have optimal $T$ counts close to the worst-case bound of $O(n^2)$.
	TODD yields $T$ counts very close to optimal because it only terminates when nearly all avenues for $T$ count reduction have been exhausted.
	The TOOL algorithm outputs $T$ counts below $O(n^2)$, so closely competes with TODD for random circuits.
	However, for the Clifford + $T$ benchmark, the optimal $T$ count is typically much less than the worst-case $O(n^2)$ bound. 	
	It is important to recall at this stage that TOOL is optimal for the special case where the circuit implements a control-Clifford.
	But even for this special case, TOOL needs to know which qubit is the control qubit  in order to take advantage of this special case behaviour.
	Consequently, a general-purpose automated compiler without prior knowledge about the input quantum circuit must have access to an additional subroutine which determines the control qubit.
	For general quantum circuits, the task is especially challenging because the circuit must also be optimally partitioned into a sequence of control-Cliffords. As such, we have left this task as an avenue of future work.
	Our implementation of TOOL uses a naive random control-qubit selection subroutine, so regardless of the low optimal $T$ count, TOOL will often output $T$ counts that remain close to the worst-case of $O(n^2)$.
	We suggest that this is the principle cause for the relatively poor performance seen in Table~\ref{tab_CliffT}, which has lead to negative savings not only over the best previous result and TODD, but sometimes also over the input circuit, and conclude that a better control-qubit selector would unlock more of TOOL's $T$-optimizing potential.

	\subsection{The $T$ Count and Other Metrics}
	\label{ssec_T_count}
	
	We acknowledge that the $T$ count does not account for the full space-time cost of quantum computation.
	Recall that we justified neglecting the cost of Clifford gates due to the high ratio between the cost of the $T$ gate and that of Clifford gates.
	The full space-time cost is highly sensitive to the architecture of the quantum computer, but for the surface code, this ratio is estimated to be between 50 and 1000 \cite{40_raussendorf_07,12_gorman_17,42_fowler_13,43_fowler_12}, depending on architectural assumptions.
		
	Note that while our protocol leads to circuits with low $T$ count, the final output often has an \emph{increased} CNOT count.
	This is largely due to step 6 of our protocol where we map the phase polynomial back to a quantum circuit using a naive approach.
	Although $T$ gates cost significantly more than CNOTs  individually, the lower bound on number of CNOT gates required to implement high complexity reversible functions exceeds the upper bound on the number of $T$ gates required by an amount that grows exponentially in $n$ \cite{44_maslov_16}.
	So for large $n$, our focus should turn instead to CNOT optimization.	
	In this paper, we focus exclusively on $T$ count optimization, which is relevant not just to circuit optimization but also to classical simulation runtime \cite{45_bravyi_2016,46_bravyi_2016,16_howard_17} and distillation of magic states ~\cite{24_campbell_17}.
	For this reason, we omit the CNOT count from our benchmark tables and leave the problem of optimizing CNOT count as an avenue for future work.
	
	\section{Conclusions \& Acknowledgements}
	\label{s_c_and_c}
	In this work, we have developed a framework for compiling and optimizing Clifford + $T$ quantum circuits that reduces the $T$ count.
	This scheme maps the quantum circuit problem to an algebraic problem involving order 3 symmetric tensors, for which we have presented an efficient near-optimal solver, and we have reviewed previous methods.
	We implemented our protocol in C++ and used it to obtain $T$ count data for quantum circuit benchmarks.
	Each variant of the compiler has managed to produce quantum circuits for quantum algorithms with lower $T$-counts than any previous attempts known to us.	 However, we find that the TODD compiler with Hadamard gadgets performs the best in practice.
	This lowers the cost of quantum computation and takes us closer to achieving practical universal fault-tolerant quantum computation.

	We acknowledge support by the Engineering and Physical Sciences Research Council (EPSRC) through grant EP/M024261/1.
	We thank Mark Howard and Matthew Amy for valuable discussions, and Dmitri Maslov for comments on the manuscript.
	We thank Quanlong Wang for spotting an error in an earlier draft of the manuscript.	

	\begin{table}[h!]				
			\caption{\small$T$-counts of Clifford + $T$ benchmark circuits for the TODD, TOOL(F) (with feedback) and TOOL(NF) (without feedback) variants of the TOpt compiler are shown. Results for other variants  can be seen in Table~\ref{tab_CliffT_2} of appendix~\ref{ap_bench}.
				Columns $\mathbf{n}$ and $\mathbf{n_h}$  show the number of qubits for the input circuit and the number of Hadamard ancillas, respectively.
				The $T$-count for the circuit is given: before optimization (Pre-Opt.);  after optimization using the best previous algorithm (Best prev.); and post-optimization using our implementation of TODD, TOOL(F) and TOOL(NF).
				The best previous algorithm is given in the \textbf{Alg.} column where: T-par is from \cite{20_Amy_13}; RM\textsubscript{m} and RM\textsubscript{r} are the majority and recursive Reed-Muller optimizers, respectively, both from \cite{22_amy_16}; and Auto\textsubscript{H} is the heavy version of the algorithm from \cite{21_nam_17}.
				We show the $T$-count saving for each TOpt variant over the best previous algorithm in the \textbf{s} columns and the execution time as run on the Iceberg HPC cluster in the \textbf{t} columns.
				Results where the execution time is marked with $^\dagger$ were obtained using an alternative implementation of TODD that is faster but less stable.
				The row \textbf{Positive saving} shows the proportion of the benchmark circuits, as a percentage, for which the corresponding compiler yields a positive saving over the best previous result.
				}
			\footnotesize
			\def\arraystretch{1.2}						
\begin{tabular}{|l|rr|rl|c|rrS|rrS|rrS|}
                        \hline       & \multicolumn{2}{c|}{\textbf{Pre-Opt.}}       & \multicolumn{2}{c|}{\textbf{Best prev.}}       & \textbf{TOpt}        & \multicolumn{3}{c|}{\textbf{TODD}}       & \multicolumn{3}{c|}{\textbf{TOOL(F)}}        & \multicolumn{3}{c|}{\textbf{TOOL(NF)}} \\ 
				\textbf{Circuit}                & $\mathbf{n}$                  & \textbf{T}                   & \textbf{T}                  & \textbf{Alg.}                 & $\mathbf{n_h}$       & \textbf{T}         & \textbf{t (s)}         & \textbf{s(\%)}        & \textbf{T}         & \textbf{t (s)}         &  \textbf{s(\%)}         & \textbf{T}         & \textbf{t (s)}          &  \textbf{s(\%)}  \\ 
				\hline Mod 5$_4$\textsuperscript{\cite{39_amy_web}}                                &5      &28      &16      & T-par                                    &6      &16      &0.04      &0     &19      &0.38      &-18.75      &19      &0.37      &-18.75\\ 
8-bit adder\textsuperscript{\cite{39_amy_web}}      &24      &399      &213      & RM\textsubscript{m}                       &71      &129      &40914.1      &39.43661972      &279      &71886.1      &-30.98591549      &284      &55574.6      &-33.33333333\\ 
CSLA-MUX$_3$\textsuperscript{\cite{auto_github}}      &16      &70      &58      & RM\textsubscript{r}                      &17      &52      &30.41      &10.34482759      &84      &122.95      &-44.82758621      &73      &84.54      &-25.86206897\\ 
CSUM-MUX$_9$\textsuperscript{\cite{auto_github}}      &30      &196      &76      & RM\textsubscript{r}                       &12      &72      &587.21      &5.263157895      &83      &2081.13      &-9.210526316      &104      &340.19      &-36.84210526\\ 
GF($2^4$)-mult\textsuperscript{\cite{39_amy_web}}      &12      &112      &68      & T-par                                   &7      &54      &8.88      &20.58823529      &75      &5.96      &-10.29411765      &75      &2.38      &-10.29411765\\ 
GF($2^5$)-mult\textsuperscript{\cite{39_amy_web}}      &15      &175      &101      & RM\textsubscript{r}                       &9      &87      &66.83      &13.86138614      &109      &17.6      &-7.920792079      &107      &28.27      &-5.940594059\\ 
GF($2^6$)-mult\textsuperscript{\cite{39_amy_web}}      &18      &252      &144      & RM\textsubscript{r}                      &11      &126      &521.86      &12.5      &165      &82.52      &-14.58333333      &157      &60.16      &-9.027777778\\ 
GF($2^7$)-mult\textsuperscript{\cite{39_amy_web}}      &21      &343      &208      & RM\textsubscript{r}                     &13      &189      &2541.4      &9.134615385      &277      &226.4      &-33.17307692      &209      &122.17      &-0.480769231\\ 
GF($2^8$)-mult\textsuperscript{\cite{39_amy_web}}      &24      &448      &237      & RM\textsubscript{r}                     &15      &230      &36335.7      &2.953586498      &370      &379.97      &-56.11814346      &281      &322.83      &-18.56540084\\ 
GF($2^9$)-mult\textsuperscript{\cite{39_amy_web}}                           &27      &567      &301      & RM\textsubscript{r}                     &17      &295      &50671.1      &1.993355482      &454      &1463.02      &-50.83056478      &351      &816.04      &-16.61129568\\ 
GF($2^{10}$)-mult\textsuperscript{\cite{39_amy_web}}                        &30      &700      &410      & T-par                                   &19      &350      &15860.3$^\dagger$      &14.63414634      &550      &7074.29      &-34.14634146      &434      &988.04      &-5.853658537\\ 
GF($2^{16}$)-mult\textsuperscript{\cite{39_amy_web}}                        &48      &1792      &1040      & T-par                                   &31      &\multicolumn{3}{c|}{-}      &1723      &75204.8      &-65.67307692      &1089      &30061.1      &-4.711538462\\ 
Grover$_5$\cite{39_amy_web}                                                 &9      &52      &52      & T-par                                   &23      &44      &17.07      &15.38461538      &106      &110.29      &-103.8461538      &83      &117.39      &-59.61538462\\ 
Hamming$_{15}$ (low)\textsuperscript{\cite{39_amy_web}}                     &17      &161      &97      & T-par                                   &34      &75      &902.69      &22.68041237      &161      &2787      &-65.97938144      &132      &1041.22      &-36.08247423\\ 
Hamming$_{15}$ (med)\textsuperscript{\cite{39_amy_web}}                     &17      &574      &230      & T-par                                    &85      &162      &12410.8$^\dagger$      &29.56521739      &727      &176275      &-216.0869565      &277      &59112.2      &-20.43478261\\ 
HWB$_6$\textsuperscript{\cite{35_Maslov_web}}                               &7      &105      &71      & T-par                                 &24      &51      &55.66      &28.16901408      &189      &140.79      &-166.1971831      &149      &59.24      &-109.8591549\\ 
Mod-Mult$_{55}$\textsuperscript{\cite{39_amy_web}}                                                             &9      &49      &35      & RM\textsubscript{m\&r}      &10      &17      &0.26      &51.42857143      &35      &5.45      &0      &19      &0.92      &45.71428571\\ 
Mod-Red$_{21}$\textsuperscript{\cite{39_amy_web}}                                                              &11      &119      &73      & T-par                                     &17      &55      &25.78      &24.65753425      &68      &40.82      &6.849315068      &71      &19.76      &2.739726027\\ 
n\textsuperscript{th}-prime$_6$\textsuperscript{\cite{35_Maslov_web}}       &9      &567      &400      & RM\textsubscript{m\&r}                   &97      &208      &37348$^\dagger$      &48      &830      &205869      &-107.5      &344      &135165      &14\\ 
QCLA-Adder$_{10}$\textsuperscript{\cite{39_amy_web}}                        &36      &238      &162      & T-par                                     &28      &116      &5496.66      &28.39506173      &167      &7544.58      &-3.086419753      &180      &4560.78      &-11.11111111\\ 
QCLA-Com$_7$\textsuperscript{\cite{39_amy_web}}                             &24      &203      &94      & RM\textsubscript{m}                       &19      &59      &198.55      &37.23404255      &79      &420.95      &15.95744681      &125      &465.41      &-32.9787234\\ 
QCLA-Mod$_7$\textsuperscript{\cite{39_amy_web}}                             &26      &413      &235      & Auto\textsubscript{H}                    &58      &165      &46574.3      &29.78723404      &295      &35249.2      &-25.53191489      &310      &22355.4      &-31.91489362\\ 
QFT$_4$\textsuperscript{\cite{39_amy_web}}                                  &5      &69      &67      & T-par                                    &39      &55      &93.65      &17.91044776      &67      &1602.91      &0      &59      &2756.34      &11.94029851\\ 
RC-Adder$_6$\textsuperscript{\cite{39_amy_web}}                             &14      &77      &47      & RM\textsubscript{m\&r}                     &21      &37      &18.72      &21.27659574      &48      &1238.12      &-2.127659574      &44      &81.77      &6.382978723\\ 
NC Toff$_4$\textsuperscript{\cite{39_amy_web}}                              &5      &21      &15      & T-par                                     &2      &13      &$<10^{-2}$       &13.33333333      &14      &0.02      &6.666666667      &14      &0.01      &6.666666667\\ 
NC Toff$_5$\textsuperscript{\cite{39_amy_web}}                              &7      &35      &23      & T-par                                    &4      &19      &0.06      &17.39130435      &22      &0.24      &4.347826087      &22      &0.12      &4.347826087\\ 
NC Toff$_6$\textsuperscript{\cite{39_amy_web}}                              &9      &49      &31      & T-par                                    &6      &25      &0.4      &19.35483871      &31      &1146.04      &0      &29      &0.67      &6.451612903\\ 
NC Toff$_{10}$\textsuperscript{\cite{39_amy_web}}                           &19      &119      &71      & T-par                                    &16      &55      &44.78      &22.53521127      &65      &1357.98      &8.450704225      &67      &110.44      &5.633802817\\ 
Barenco Toff$_4$\textsuperscript{\cite{39_amy_web}}                         &5      &28      &16      & T-par                                    &3      &14      &$<10^{-2}$       &12.5      &16      &0.02      &0      &16      &0.03      &0\\ 
Barenco Toff$_5$\textsuperscript{\cite{39_amy_web}}                         &7      &56      &28      & T-par                                  &7      &24      &0.45      &14.28571429      &26      &0.88      &7.142857143      &27      &0.56      &3.571428571\\ 
Barenco Toff$_6$\textsuperscript{\cite{39_amy_web}}                         &9      &84      &40      & T-par                                 &11      &34      &1.94      &15      &42      &12.6      &-5      &42      &2.59      &-5\\ 
Barenco Toff$_{10}$\textsuperscript{\cite{39_amy_web}}                      &19      &224      &100      & T-par                                    &31      &84      &460.33      &16      &120      &1938.01      &-20      &122      &1269.03      &-22\\ 
VBE-Adder$_3$\textsuperscript{\cite{39_amy_web}}                            &10      &70      &24      & T-par                              &4      &20      &0.15      &16.66666667      &24      &1639.76      &0      &38      &1.93      &-58.33333333\\ 
				\hline \multicolumn{6}{|l|}{\textbf{Mean}}                            &           &                  &\textbf{19.76}      &      &      &-31.59      &      &      &-14.13\\ 
				\multicolumn{6}{|l|}{\textbf{Standard error}}                            &           &                  &2.12      &      &      &8.87      &      &      &4.69\\ 
				\multicolumn{6}{|l|}{\textbf{Min}}                           &           &                  &0      &      &      &-216.09      &      &      &-109.86\\ 
				\multicolumn{6}{|l|}{\textbf{Max}}               &           &                    &\textbf{51.43}      &      &      &15.96      &      &      &45.71\\ \hline
\hline \multicolumn{6}{|l|}{\textbf{Positive saving (\%)}}                &      \multicolumn{3}{c|}{\textbf{96.88}}      &      \multicolumn{3}{c|}{\textbf{18.18}}      &     \multicolumn{3}{c|}{\textbf{30.30}} \\ \hline
\end{tabular}			
			\label{tab_CliffT}		
		\end{table}

\FloatBarrier

\clearpage

	\begin{appendices}
	\section{Clifford + $T$ Benchmarks for TODD-part and TODD-$h_{\text{cap}}$}
	\label{ap_bench}

	\begin{table}[h!]			
	\centering		
	\caption{\footnotesize$T$-counts of Clifford + $T$ benchmark circuits for the TODD-part and TODD-$h_{\text{cap}}$ variants of TOpt are shown.
		TODD-part uses Hadamard-bounded partitions rather than Hadamard gadgets and ancillas and TODD-$h_{\text{cap}}$ sets a fixed cap, $h_{\text{cap}}$, on the number of Hadamard ancillas available to the compiler.
		Starting at $h_{\text{cap}}=1$, we iteratively incremented the value of $h_{\text{cap}}$ by $1$ until obtaining the first result with a positive $T$-count saving over the best previous algorithm.
		The value of $h_{\text{cap}}$ for which this occured is reported in the $h_{\text{cap}}$ column, and the number of partitions, $T$-count, execution time and percentage saving for this result are detailed by column group TODD-$h_{\text{cap}}$.
		TODD-$h_{\text{cap}}$ results that yield a positive saving for $h_{\text{cap}}=0$ correspond to results for TODD-part and results that require $h_{\text{cap}}=n_h$ Hadamard ancillas correspond to results for TODD.
		As we are strictly interested in intermediate values of $h_{\text{cap}}$, we omit these data and refer the reader to the appropriate result.
		The number of Hadamard partitions is given by the $\mathbf{N_p}$ columns.
		As in Table~\ref{tab_CliffT}, $\mathbf{n}$  is the number of qubits for the input circuit; \textbf{T} are $T$-counts: for the circuit before optimization (Pre-Opt.); due to the best previous algorithm (Best prev.); and post-optimization using variants of our compiler.
		The best previous algorithm is given in the \textbf{Alg.} column where: T-par is from \cite{20_Amy_13}; RM\textsubscript{m} and RM\textsubscript{r} are the majority and recursive Reed-Muller optimizers, respectively, both from \cite{22_amy_16}; and Auto\textsubscript{H} is the heavy version of the algorithm from \cite{21_nam_17}.
		We show the $T$-count saving for each TOpt variant over the best previous algorithm in the \textbf{s} columns and the execution time as run on the Iceberg HPC cluster in the \textbf{t} columns.
		Results where the execution time is marked with $^\dagger$ were obtained using an alternative implementation of TODD that is faster but less stable.
		\textbf{Positive saving} shows the proportion of the benchmark circuits, as a percentage, for which the corresponding compiler yields a positive saving over the best previous result.
		}
	\footnotesize
	\def\arraystretch{1.17}
	\begin{tabular}{|l|rr|rl|rrrS|rrrrS|}
			\hline & \multicolumn{2}{c|}{\textbf{Pre-Opt.}} & \multicolumn{2}{c|}{\textbf{Best prev.}} & \multicolumn{4}{c|}{\textbf{TODD-part}} & \multicolumn{5}{c|}{\textbf{TODD-$h_{\text{cap}}$}} \\ 
		\textbf{Circuit}          & $\mathbf{n}$            & \textbf{T}             & \textbf{T}            & \textbf{Alg.}            & $\mathbf{N_p}$       & \textbf{T}    & \textbf{t (s)}    &  \textbf{s(\%)}     &$h_{\text{cap}}$& $\mathbf{N_p}$       & \textbf{T}    & \textbf{t (s)}    &  \textbf{s(\%)}  \\ 
	\hline	Mod 5$_4$\textsuperscript{\cite{39_amy_web}}                          &5&28&16& T-par                              &7&18&$<10^{-2}$&-12.5&1&4&16&$<10^{-2}$&0\\ 
8-bit adder\textsuperscript{\cite{39_amy_web}}                        &24&399&213& RM\textsubscript{m}               &20&283&12.63&-32.86384977&13&5&212&227.81&0.469483568\\ 
CSLA-MUX$_3$\textsuperscript{\cite{auto_github}}                       &16&70&58& RM\textsubscript{r}                &7&62&0.38&-6.896551724&5&3&54&3.73&6.896551724\\ 
CSUM-MUX$_9$\textsuperscript{\cite{auto_github}}                       &30&196&76& RM\textsubscript{r}                &3&76&20.31&0&4&2&74&36.57&2.631578947\\ 
Cycle ${17}_3$\textsuperscript{\cite{39_amy_web}}&35&4739&1944& RM\textsubscript{m}                 &573&2625&1001.11&-35.0308642&43&15&1939&25507.5$^\dagger$&0.257202\\ 
GF($2^4$)-mult\textsuperscript{\cite{39_amy_web}}                     &12&112&68& T-par                              &3&56&0.55&17.64705882&\emph{0}&\multicolumn{4}{c|}{\emph{See result for  TODD-part}}\\ 
GF($2^5$)-mult\textsuperscript{\cite{39_amy_web}}                     &15&175&101& RM\textsubscript{r}                 &3&90&6.96&10.89108911&\emph{0}&\multicolumn{4}{c|}{\emph{See result for TODD-part}}\\ 
GF($2^6$)-mult\textsuperscript{\cite{39_amy_web}}                     &18&252&144& RM\textsubscript{r}               &3&132&121.16&8.333333333&\emph{0}&\multicolumn{4}{c|}{\emph{See result for  TODD-part}}\\ 
GF($2^7$)-mult\textsuperscript{\cite{39_amy_web}}                     &21&343&208& RM\textsubscript{r}               &3&185&153.75&11.05769231&\emph{0}&\multicolumn{4}{c|}{\emph{See result for  TODD-part}}\\ 
GF($2^8$)-mult\textsuperscript{\cite{39_amy_web}}                     &24&448&237& RM\textsubscript{r}                &3&216&517.63&8.860759494&\emph{0}&\multicolumn{4}{c|}{\emph{See result for  TODD-part}}\\ 
GF($2^9$)-mult\textsuperscript{\cite{39_amy_web}}                     &27&567&301& RM\textsubscript{r}               &3&301&2840.56&0&8&2&295&3212.53&1.993355482\\ 
GF($2^{10}$)-mult\textsuperscript{\cite{39_amy_web}}                  &30&700&410& T-par                             &3&351&23969.1&14.3902439&\emph{0}&\multicolumn{4}{c|}{\emph{See result for  TODD-part}}\\ 
GF($2^{16}$)-mult\textsuperscript{\cite{39_amy_web}}                  &48&1792&1040& T-par                             &3&922&76312.5$^\dagger$&11.34615385&\multicolumn{5}{c|}{-}\\ 
Grover$_5$\cite{39_amy_web}                                           &9&52&52& T-par                              &18&52&0.02&0&5&4&50&0.3&3.846153846\\ 
Hamming$_{15}$ (low)\textsuperscript{\cite{39_amy_web}}               &17&161&97& T-par                             &22&113&0.53&-16.49484536&5&6&93&2.93&4.12371134\\ 
Hamming$_{15}$ (med)\textsuperscript{\cite{39_amy_web}}               &17&574&230& T-par                              &59&322&1.57&-40&11&7&226&58.08&1.739130435\\ 
Hamming$_{15}$ (high)\textsuperscript{\cite{39_amy_web}}              &20&2457&1019& T-par                             &256&1505&16.84&-47.69381747&13&24&1010&595.8&0.883218842\\ 
HWB$_6$\textsuperscript{\cite{35_Maslov_web}}                         &7&105&71& T-par                             &15&82&0.01&-15.49295775&3&6&68&0.13&4.225352113\\ 
HWB$_8$\textsuperscript{\cite{35_Maslov_web}}                         &12&5887&3531& RM\textsubscript{m\&r}                &709&4187&6.53&-18.57830643&9&110&3517&259.14&0.396488247\\ 
Mod-Adder$_{1024}$ \textsuperscript{\cite{39_amy_web}}                &28&1995&1011& T-par                              &234&1165&98.8&-15.23244313&10&27&978&665.5&3.264094955\\ 
Mod-Adder$_{1048576}$\textsuperscript{\cite{39_amy_web}}              &0&0&7298& T-par                                 &2030&9480&89486.5$^\dagger$&-29.89860236&\multicolumn{5}{c|}{-}\\ 
Mod-Mult$_{55}$\textsuperscript{\cite{39_amy_web}}                    &9&49&35& RM\textsubscript{m\&r}              &6&28&0.02&20&\emph{0}&\multicolumn{4}{c|}{\emph{See result for  TODD-part}}\\ 
Mod-Red$_{21}$\textsuperscript{\cite{39_amy_web}}                     &11&119&73& T-par                              &15&85&0.06&-16.43835616&4&5&69&0.59&5.479452055\\ 
n\textsuperscript{th}-prime$_6$\textsuperscript{\cite{35_Maslov_web}} &6&567&400& RM\textsubscript{m\&r}              &63&402&0.17&-0.5&2&29&384&0.98&4\\ 
n\textsuperscript{th}-prime$_8$\textsuperscript{\cite{35_Maslov_web}} &12&6671&4045& RM\textsubscript{m\&r}                 &774&5034&8.4&-24.4499382&12&105&4043&898.98&0.049443758\\ 
QCLA-Adder$_{10}$\textsuperscript{\cite{39_amy_web}}                  &36&238&162& T-par                              &6&184&223.25&-13.58024691&5&3&157&366.1&3.086419753\\ 
QCLA-Com$_7$\textsuperscript{\cite{39_amy_web}}                       &24&203&94& RM\textsubscript{m}                &7&135&11.62&-43.61702128&16&2&81&170.77&13.82978723\\ 
QCLA-Mod$_7$\textsuperscript{\cite{39_amy_web}}                       &26&413&235& Auto\textsubscript{H}              &15&305&34.76&-29.78723404&23&3&221&289.77$^\dagger$&5.957446809\\ 
QFT$_4$\textsuperscript{\cite{39_amy_web}}                            &5&69&67& T-par                               &38&67&$<10^{-2}$&0&2&13&63&0.02&5.970149254\\ 
RC-Adder$_6$\textsuperscript{\cite{39_amy_web}}                       &14&77&47& RM\textsubscript{m\&r}              &13&59&0.11&-25.53191489&6&3&45&0.97&4.255319149\\ 
NC Toff$_3$\textsuperscript{\cite{39_amy_web}}                        &5&21&15& T-par                               &3&15&$<10^{-2}$&0&\emph{2}&$=n_h$&\multicolumn{3}{c|}{\emph{See result for  TODD}}\\ 
NC Toff$_4$\textsuperscript{\cite{39_amy_web}}                        &7&35&23& T-par                               &5&23&$<10^{-2}$&0&\emph{4}&$=n_h$&\multicolumn{3}{c|}{\emph{See result for  TODD}}\\ 
NC Toff$_5$\textsuperscript{\cite{39_amy_web}}                        &9&49&31& T-par                               &7&31&0.01&0&5&2&29&0.2&6.451612903\\ 
NC Toff$_{10}$\textsuperscript{\cite{39_amy_web}}                     &19&119&71& T-par                              &17&71&0.74&0&10&3&69&12.48&2.816901408\\ 
Barenco Toff$_3$\textsuperscript{\cite{39_amy_web}}                   &5&28&16& T-par                               &4&22&$<10^{-2}$&-37.5&2&2&14&$<10^{-2}$&12.5\\ 
Barenco Toff$_4$\textsuperscript{\cite{39_amy_web}}                   &7&56&28& T-par                              &8&38&0.01&-35.71428571&4&2&26&0.06&7.142857143\\ 
Barenco Toff$_5$\textsuperscript{\cite{39_amy_web}}                   &9&84&40& T-par                               &12&54&0.03&-35&6&2&38&0.35&5\\ 
Barenco Toff$_{10}$\textsuperscript{\cite{39_amy_web}}                &19&224&100& T-par                               &32&134&2.27&-34&16&2&98&54.75&2\\ 
VBE-Adder$_3$\textsuperscript{\cite{39_amy_web}}                      &10&70&24& T-par                               &5&36&0.04&-50&\emph{4}&$=n_h$&\multicolumn{3}{c|}{\emph{See result for  TODD}}\\ \hline
		\multicolumn{5}{|l|}{\textbf{Mean}}            &       &         &&-13.19&\textbf{9}&&           &      &4.05\\ 
		\multicolumn{5}{|l|}{\textbf{Standard error}}            &       &         &&3.15&1.65&&           &      &0.64\\ 
		\multicolumn{5}{|l|}{\textbf{Min}}          &                 &         &&-50&1&&           &      &0\\ 
		\multicolumn{5}{|l|}{\textbf{Max}}        &        &         &&20&43&&     &          &13.83\\  \hline
\hline \multicolumn{5}{|l|}{\textbf{Positive saving (\%)}}          &\multicolumn{4}{c|}{\textbf{20.51}}&\multicolumn{5}{c|}{\textbf{96.30}}\\ \hline

	\end{tabular}
	\label{tab_CliffT_2}		
\end{table}

	In order to investigate the relative effectiveness of the Hadamard gadget and Hadamard-bounded partition methods for dealing with Hadamard gates, we repeated the benchmarks from Table~\ref{tab_CliffT} but for the latter method. The results are shown in the TODD-part column group of Table~\ref{tab_CliffT_2}.
	For the Hadamard partition method, we found that the compiler runtime is significantly decreased, making the optimization of larger quantum circuits feasible. 
	However, the performance is worse in terms of raw $T$ count reductions, often leading to higher $T$ counts than the best previous result. 	
	It is important to note that for a given input circuit, the $T$ count is highly sensitive on the choice of Hadamard partitioning, of which, in general, there are many.
	Our implementation does not optimize over Hadamard partitioning choices, so there is potential for developing a more powerful version of TODD-part that makes use of an advanced Hadamard partitioning algorithm, which may lead to greater $T$ count reductions.
	
	The TODD compiler completely gadgetizes each Hadamard gate, whereas the TODD-part compiler completely partitions the circuit into Hadamard-bounded partitions.
	It is possible to interpolate between these two approaches using a parameter $h_{\text{cap}}$ that enforces a cap on the number of available Hadamard ancillas.
	Upon reaching this cap, the compiler synthesises the circuit encountered so far, freeing up the Hadamard ancillas for the subsequent Hadamard partition.
	We have implemented this feature, and in order to quantify the overhead required to see a $T$ count reduction, we ran each benchmark repeatedly, incrementing the value of $h_{\text{cap}}$ until we saw a reduction over the best previous result.
	The results for this experiment are presented in Table~\ref{tab_CliffT_2}. We found that the relationship between $h_{\text{cap}}$ and $T$ count savings is favourable: relatively few Hadamard gadgets are required to see a reduction over the best previous result.
	Over all the benchmark circuits, where the number of qubits and the $T$ count ranges up to $n=36$ and $T=6671$, respectively, we found that on average 9 Hadamard ancillas are required to see positive saving and at most 23 ancillas are needed for all but one exceptional result (Cycle ${17}_3$), which requires 43.
	This suggests that,
	while TODD combined with full Hadamard gadgetization is clearly the forerunner amongst our compilers for reducing the $T$ count,
	a modest improvement in the Hadamard partitioning scheme, or adding a pre-processing step that looks for Hadamard gate reductions may lead to a better version of TODD that requires no non-unitary gadgets, has feasible compiler runtimes for large circuits, and yields positive $T$ count savings.
	
	\FloatBarrier
			
	\section{Lempel's Factoring Algorithm}
	\label{ap_lempel}
	We describe Lempel's factoring algorithm (originally from reference \cite{25_lempel_75}) using conventions consistent with our description of the TODD algorithm to more easily see how TODD generalizes Lempel's algorithm for order 3 tensors. Lempel's factoring algorithm takes as input a symmetric tensor of order 2 (a matrix), which we denote $S\in \mathbb{Z}_2^{(n,n)}$ and outputs a matrix $A\in \mathbb{Z}_2^{(n,m)}$ where the elements of $A$ and $S$ are related as follows:
	\begin{equation}
		\label{eq_lemp1}
		S_{\alpha,\beta} = \sum_{k=1}^{m}A_{\alpha,k}A_{\beta,k} \pmod{2}.
	\end{equation}
Lempel proved that the minimal value of $m$ is equal to
	\begin{equation}
	\mu(S) = \rho(S) + \delta(S),
	\end{equation}
	where $\rho(S)$ is the rank of matrix $S$ and
	\begin{equation}
	\delta(S) = \begin{cases}
	1 & \text{if } S_{\alpha,\alpha}=0\ \forall \ \alpha\in[1,n] \\
	0 & \text{otherwise}
	\end{cases}.
	\end{equation}
	 Lempel's algorithm solves the problem of finding an $A$ matrix that obeys equation~\eqref{eq_lemp1} for a given $S$ matrix such that $m=\mu(S)$. Such an $A$ matrix is referred to as a minimal factor of $S$. 
		
	In the following, we denote the number of columns of $A$ as $c(A)$ and the $j$\textsuperscript{th} column of $A$ as $\mathbf{c}_j(A)$. Lempel's algorithm is the following:
	\begin{enumerate}
		\item Generate an initial (necessarily suboptimal) $A$ matrix for $S$.
		\item Check if $c(A) = \mu(S)$. If true, exit and output $A$. Otherwise, perform steps 3 to 7.
		\item Find a $\mathbf{y}\in\mathbb{Z}_2^m$ such that $A\mathbf{y} = \mathbf{0}$ and $0<|y|<c(A)$.
		\item If $|y|=1 \pmod{2}$ then update $\mathbf{y} \rightarrow (\mathbf{y}^T, 1)^T$ and $A = (A \quad \mathbf{0})$.
		\item Find a pair of indices $a,b\in [1,m], \ a\neq b$ such that $y_a \oplus y_b = 1$.
		\item Apply transformation $A \rightarrow A \oplus \mathbf{z}\mathbf{y}^T$, where $\mathbf{z}=\mathbf{c}_a(A) \oplus \mathbf{c}_b(A)$.
		\item Remove the $a$\textsuperscript{th} and $b$\textsuperscript{th} columns from $A$, then go to step 2.
	\end{enumerate}

	Note that the key difference between the Lempel and TODD algorithm is that TODD additionally requires condition \emph{C3} from lemma~\ref{lem1} to be satisfied.

	\clearpage
	\FloatBarrier
	
	\section{TOOL algorithm}
	\label{App_TOOL}
	 
	 Here we give a detailed description of TOOL, with the main idea illustrated by Fig. \ref{Fig_TOOLbasic}.	TOOL is best explained in terms of weighted polynomials (recall equation~\eqref{eq_wp}). The algorithm is iterative, where each round consists of the five steps detailed below. Before the first round, we initialize 	 
	 an `empty' output gate synthesis matrix, $A_{\text{out}}\in\mathbb{Z}_2^{(n,0)}$.
	\begin{enumerate}
		\item Choose an integer $c \in [1,n]$ such that there is at least one term in $f$ with $x_c$ as a factor. If no such $c$ exists, the algorithm terminates and outputs $A_{\text{out}}$.
		\item Find $\tilde{f}_c$, the \emph{target polynomial} of $f$ with respect to $x_c$ (see equation~\ref{eq_tool1} below).
		\item Determine the order 2 signature tensor, $\tilde{S}$, of $\tilde{f}_c$. 
		\item Find $\tilde{A}$, a minimal factor of $\tilde{S}$, using Lempel's factoring algorithm.
		\item Recover an order 3 gate synthesis matrix, $A$, for $\tilde{A}$, and append it to $A_{\text{out}}$. Replace $f$ with $ f -  |A^T\mathbf{x}|$.\label{tool_final}
	\end{enumerate}
	Each round of TOOL gives a new $f$ that depends on fewer $x$ variables. When $f$ depends on only $n_{\text{RM}}$ or fewer variables, we switch to the optimal brute force optimizer, RM. 
		
	We will now explain each step of the above description in detail, unpacking the contained definitions. 
	In step 1, we select an index $c$, which corresponds to the control qubit of the control-$U_{2\tilde{f}_c}$ operator shown in Fig.~\ref{Fig_TOOLbasic}. The order that we choose $c$ for each round can affect the output and therefore is a parameter of TOOL. For all results, we randomly selected $c$ with uniform probability from the set of all indices $\{c\}$ for which $x_c$ is a factor of at least one term in $f$.
	
	Next, we observe that any $f$ can be decomposed into $f = f_c + f_c^\prime$,  where we define $f_c$ as a weighted polynomial containing all terms of $f$ with $x_c$ as a factor.	
	The former part, $f_c$, can be further decomposed as follows,
	\begin{equation}
	\label{eq_tool1}
	f_c = 2x_c\tilde{f}_c + l_c x_c
	\end{equation}
	where $\tilde{f}_c$ is quadratic and so can be optimally synthesized efficiently. In step 2, we extract $\tilde{f}_c$, which is implicitly fixed by the above equations. We refer to $\tilde{f}_c$ as a \emph{target polynomial} because it corresponds to the target of a control-$U_{2f}$ operator, where $f=\tilde{f}_c$ and $\ket{x_c}$ is the control qubit. 
	
As an aside, we remark that the target polynomial is related to Shannon cofactors that appear in Boole's expansion theorem.  Specifically, we have
		\begin{equation}	
		\tilde{f}_c = \frac{f^+_c-f^-_c-l_c}{2},
		\end{equation}
where $f^+_c$ and $f^-_c$ are the positive and negative Shannon cofactors, respectively, of $f$ with respect to $x_c$, and $l_c$ is the linear coefficient of $f$ associated with $x_c$.	
	
	In step 3, we map $\tilde{f}_c$ to a signature tensor of order 2 (a matrix) for use with Lempel's factoring algorithm. Let $\tilde{l}_{\alpha},  \tilde{q}_{\alpha,\beta}$ be the linear and quadratic coefficients of $\tilde{f}_c$, respectively. For each $\alpha,\beta \neq c$, the elements of $\tilde{S}$ are obtained as follows.
	\begin{equation}
	\tilde{S}_{\alpha,\beta} = \begin{cases}
	\tilde{l}_{\alpha} \pmod{2} & \text{if } \ \alpha=\beta \\
	\tilde{q}_{\alpha,\beta}  \pmod{2} & \text{if } \ \alpha\neq\beta
	\end{cases}.
	\end{equation}
	Finding a minimal factor of $\tilde{S}_{\alpha,\beta}$ is the problem 2-STR.  Therefore,  we can use Lempel's algorithm (see appendix~\ref{ap_lempel}) to find a matrix $\tilde{A}\in\mathbb{Z}_2^{(n,\tilde{m})}$, which is a minimal factor of $\tilde{S}$ such that
	\begin{equation}
	\label{eq_tool2}
	\tilde{f}_c = |\tilde{A}^T\mathbf{x}| = \sum_{j=1}^{\tilde{m}}  \left[ \bigoplus_{i=1}^n \tilde{A}_{i,j}  x_i \right] \pmod{8} .
	\end{equation}
	By substituting equation~\eqref{eq_tool2} into equation~\eqref{eq_tool1} we obtain
	\begin{align}
	\label{eq_f_c}
	f_c & = 2 x_c |\tilde{A}^T\mathbf{x}| + l_c x_c , \\
	& =   \sum_{j=1}^{\tilde{m}} 2x_c \left[ \bigoplus_{i=1}^n   \tilde{A}_{i,j}  x_i \right] + l_c x_c  \pmod{8} , 
	\end{align}
	where we have taken the factor $2x_c$ within the Hamming weight summation.  Next, we use the modular identity $2ab=a+b-a\oplus b$ with $a=x_c$ and $b$ as the contents of the square brackets. This gives 
	\begin{align}	
	f_c & =   \sum_{j=1}^{\tilde{m}} \left( x_c + \left[ \bigoplus_{i=1}^n    \tilde{A}_{i,j}  x_i \right] - x_c  \oplus \left[ \bigoplus_{i=1}^n    \tilde{A}_{i,j}  x_i \right]    \right)  + l_c x_c  \pmod{8} , \\ 
	& =   x_c(\tilde{m}+l_c ) + \sum_{j=1}^{\tilde{m}}  \left[ \bigoplus_{i=1}^n    \tilde{A}_{i,j}  x_i \right] - \sum_{j=1}^{\tilde{m}}  x_c  \oplus \left[ \bigoplus_{i=1}^n    \tilde{A}_{i,j}  x_i \right]      \pmod{8} , \\
	& =   x_c(\tilde{m}+l_c ) + |\tilde{A}^T\mathbf{x}| - |(\tilde{A}\oplus B_c)^T\mathbf{x}|     \pmod{8} ,
	\label{eq_tool_final}
	\end{align}
	where $B_c \in \mathbb{Z}_2^{(n,m)}$ is a matrix with elements
	\begin{equation}
	\left[B_c\right]_{i,j} = \begin{cases}
		1 & \text{if } i=c \\
		0 & \text{otherwise}
	\end{cases}.
	\end{equation}
	This is now in the form of a phase polynomial (e.g. see equation~\eqref{eq_gsm}) with no more than $1+2\tilde{m}$ terms, where $\tilde{m}$ was the optimal size of the factorisation found using Lempel's algorithm.
	
	There are two versions of TOOL: with and without feedback.
	The difference between these versions determines whether all of equation~\eqref{eq_tool_final} is put into $A_{\text{out}}$ or whether parts are `fed back' into $f$ for subsequent rounds. 
	This leads to two distinct definitions of the $A$ matrix referred to in step 5 of TOOL:
	\begin{equation}
	|A^T\mathbf{x}| = \begin{cases}
	(\tilde{m}+l_c)x_c - |(\tilde{A}\oplus B_c)^T\mathbf{x}| & \text{feedback} \\
	(\tilde{m}+l_c)x_c - |(\tilde{A}\oplus B_c)^T\mathbf{x}| + |\tilde{A}^T\mathbf{x}| & \text{without feedback}
	\end{cases}.
	\end{equation}
	Notice that both $(\tilde{m}+l_c)x_c$ and $|(\tilde{A}\oplus B_c)^T\mathbf{x}|$ depend on $x_c$, so must be sent to output.
	Furthermore, they comprise \emph{all} the terms that depend on $x_c$, which is why sending  $|\tilde{A}^T\mathbf{x}|$ to output is optional, and why the number of dependent variables is reduced by at least 1 each round.
	For the \emph{feedback} version, $|\tilde{A}^T\mathbf{x}|$ is kept within $f$ during step 5, whereas it is sent to output $A_{\text{out}}$ in the \emph{without feedback} version. 
	
	\section{Calculating Clifford correction}
	\label{ap_Cliff}
	We will now describe how to determine the Clifford correction required to restore the output of \emph{T-Optimiser} to the input unitary. Let the input of \emph{T-Optimiser} be a weighted polynomial $f$ that implements unitary $U_f \in \mathcal{D}_3$, and let the output be a weighted polynomial $g$. Any $f$ can be split into the sum
	\begin{equation}
	\label{eq_apB1}
	f = f_1 + f_2,
	\end{equation}
	where the coefficients of $f_1$ are in $\mathbb{Z}_2$ and those of $f_2$ are even. From the definition of $\emph{T-Optimiser}$, we know the coefficients of $f$ and $g$ have the same parity i.e.
	\begin{equation}
	\label{eq_apB2}
	g = g_1 + g_2 = f_1 + g_2,
	\end{equation}
	where $g_1, g_2$ are similarly defined for $g$. Using equations~\eqref{eq_apB1} and \eqref{eq_apB2} we find,
	\begin{equation}
	\label{eq_cliff1}
	g = f + (g_2 - f_2).
	\end{equation}
	Equation~\eqref{eq_cliff1} implies that $U_{\text{Clifford}} = U_{(g_2 - f_2)} \in \mathcal{D}_2$. Therefore, the Clifford correction is $U_{\text{Clifford}}^\dagger = U_{(g_2 - f_2)}^\dagger = U_{(f_2-g_2)}$. We can map $(f_2-g_2)$ to a phase polynomial and subsequently to a quantum circuit, $\mathcal{U}_{\text{Clifford}}^\dagger$.

	\clearpage
	\FloatBarrier

	\section{TODD pseudocode}
	\label{pseudocode}
	
		\begin{algorithm}[H]		
		\caption{Third Order Duplicate-then-Destroy (TODD) Algorithm}
		\label{al_1}
		\textbf{Input:} Gate synthesis matrix $A\in \mathbb{Z}_2^{(n,m)}$. \\
		\textbf{Output:} Gate synthesis matrix $A^\prime \in \mathbb{Z}_2^{(n,m^\prime)}$ such that $m^\prime \leq m$ and $S^{(A^\prime)}=S^{(A)}$.
		\footnotesize
		\begin{itemize}
			\item Let $\mathrm{col}_j(A)$ be a function that returns the $j^{\text{th}}$ column of $A$.
			\item Let $\text{cols}(A)$ be a function that returns the number of columns of $A$.
			\item Let $\text{nullspace}(A)$ be a function that returns a matrix whose columns generate the right null space of A.
			\item Let $\mathrm{proper(A)}$ be a function that returns matrix $A$ with every pair of identical columns and every all-zero column removed.
		\end{itemize}
		\normalsize
		\begin{algorithmic}			
			\Procedure{TODD}{}
			\State{Initialize $A^\prime \leftarrow A$}
			\BState \emph{start}:
			\ForAll{$1\leq a < b \leq \mathrm{cols}(A^\prime)$}
			\State{$\mathbf{z}\leftarrow \text{col}_a(A^\prime) + \text{col}_b(A^\prime)$}
			\State{$\tilde{A}\leftarrow \begin{pmatrix}
				A^\prime \\	
				\chi(A^\prime,\mathbf{z})
				\end{pmatrix}$}
			\State{$N\leftarrow \text{nullspace}(\tilde{A})$}			
			\ForAll{$1 \leq k \leq \text{cols}(N)$}
			\State{$\mathbf{y}\leftarrow\text{col}_k(N)$}
			\If{$y_a\oplus y_b=1$}			
			\If{$|\mathbf{y}|=1 \pmod 2$}			
			\State{$A^\prime \leftarrow \begin{pmatrix}
				A^\prime & \mathbf{0}
				\end{pmatrix}$}
			\State{$\mathbf{y}\leftarrow \begin{pmatrix}
				\mathbf{y} \\
				1
				\end{pmatrix}$}					
			\EndIf
			\State{$A^\prime \leftarrow A^\prime + \mathbf{z}\mathbf{y}^T$}
			\State{$A^\prime \leftarrow \mathrm{proper(A^\prime)}$}
			\State{\textbf{goto} \emph{start}}
			\EndIf
			\EndFor
			\EndFor
			\EndProcedure			
		\end{algorithmic}
	\end{algorithm}

	\FloatBarrier

	\section{Computational Efficiency of TODD}
	In this appendix, we calculate an upper-bound on the worst-case computational efficiency of the TODD algorithm as described in appendix~\ref{pseudocode}, in terms of the number of arithmetic operations on $GF(2)$ required.

	Let $A$ be a gate synthesis matrix with $n$ rows and $m$ columns that is used as input for the TODD algorithm.
	The loop, $L_1$, over each column pair $(a,b)$ requires at most $\binom{m}{2} = O(m^2)$ iterations to complete.
	Inside $L_1$, there are four lines of pseudocode: a column addition, requiring no more than $n$ operations; a matrix concatenation and calculation of $\chi(A,\mathbf{z})$, requiring $E_1$ operations; a nullspace calculation, requiring $O(n^3) + O(m^2n)$ operations using Gaussian elimination; and finally a nested loop $L_2$, requiring $E_2$ operations.
	
	From equation~\eqref{eq_chi_row}, we see that each row of $\chi(A, \mathbf{z})$ can be calculated with $O(m)$ operations.
	There are a maximum of ${n}\choose{3}$ rows in $\chi(A, \mathbf{z})$ so the total number of operations required to calculate $\chi$ is $O(n^3m)$.
	Combining this with the matrix concatenation, we find that $E_1 = O(n^3m) + nm = O(n^3m)$.

	The loop $L_2$ executes in at most
	\begin{equation}
	\textsc{cols}(\textsc{nullspace}(\tilde{A})) := \textsc{colrank}(\textsc{nullspace}(\tilde{A})) = m - \textsc{rank}(\tilde{A}) \leq m - \textsc{rank}(A) \leq m - n
	\end{equation} iterations.
	The identity between the column rank and the number of columns follows from the assertion that the nullspace function outputs a matrix whose columns are a linearly independent basis for the nullspace of $A$.

	The loop $L_2$ is composed of a conditional that requires 1 addition (by merging the first line of $L_2$ and the conditional).
	The content of the conditional is only evaluated once, so can be considered as part of $L_1$ for this calculation.
	Therefore, the number of operations performed in $L_2$ is $E_2 = m-n$.
	
	The nested conditional requires at most $m + n + 1$ operations, where the terms are due to the Hamming weight of $|\mathbf{y}|$,  concatenating an all-zero column to $A^\prime$ and concatenating a one to $\mathbf{y}$, respectively.
	The line $A^\prime \leftarrow A^\prime + \mathbf{z}\mathbf{y}^T$ requires at most $n(m+1)$ operations and the proper function can be computed using at most $m$ operations by keeping track of all-zero columns with a Boolean array, for a small physical overhead of $m$.

	The outermost loop (between \emph{start} and \textbf{goto} \emph{start}) by definition executes in no more than $m-m^\prime$ iterations where $m^\prime$ is the number of columns of the output. In this worst-case calculation, we assume $m^\prime = 0$.
	
	So the TODD algorithm can be executed using
	\begin{align}
	&O(m\left[n+O(n^3m) + O(n^3) + O(m^2n) + (m-n) + (m+n+1) + n(m+1) + m\right])
	\\ & = O(m\left[O(n^3m) + O(n^3)+ O(m^2n)\right])
	\\ & =O(n^3m^2) + O(nm^3)
	\end{align}
	operations.

	Therefore, given a family of Clifford + $T$ circuits with $n$ qubits, $h$ Hadamard gates and $t$ $T$ gates, we would expect our compiler to execute in time asymptotically upper-bounded by a function of the following form 
	\begin{align}
	&O((n+h)^3t^2) + O((n+h)t^3) \\
	& = O(n^3t^2) + O(h^3t^2) + O(nt^3) + O(ht^3),
	\end{align}
	where we have made the reasonable assumption that the computational bottleneck is due to the TODD algorithm, rather than the circuit preprocessing stages or mapping between different circuit representations, for instance.
	
	In practice, the actual runtimes for the benchmark quantum circuits seen in Table~\ref{tab_CliffT} are much lower than this worst-case upper-bound.
	Furthermore, the compiler runtime is dependent on the structure of the input quantum circuit, rather than simply the number of qubits and gates from which it is composed.
	Consequently, we do not see a simple relation between circuit parameters $n,t,h$ and the runtime for the benchmarks in Table~\ref{tab_CliffT}. 

	Note that in our calculation of the complexity, we assumed that we must calculate \emph{every} row of $\chi(A, \mathbf{z})$. In practice, we find that many of the rows are identical. An algorithm that calculates only the unique rows may lead to improved computational efficiency.	

	\end{appendices}

\end{document}